\newtheorem{theorem}         {Theorem}[section]
\newtheorem{proof}         {Proof}
\newtheorem{proposition}            {Proposition}[section]
\newtheorem{definition}            {Definition}[section]
\newtheorem{assumption}            {Assumption}[section]
\begin{document}

\setlength{\abovedisplayskip}{7pt}  
\setlength{\belowdisplayskip}{7pt}  

\begin{frontmatter}

\title{
Retrofit Control with Approximate Environment Modeling
\vspace{-8mm}
\thanksref{footnoteinfo}}  

\thanks[footnoteinfo]{
This work was supported by JST CREST Grant Number JP-MJCR15K1, Japan.
Corresponding author: T.~Ishizaki, Tel. \&\ Fax: +81-3-5734-2646. 
}

\author[TIT]{Takayuki Ishizaki$^{*}$}\ead{ishizaki@sc.e.titech.ac.jp},    
\author[TIT]{Takahiro Kawaguchi}\ead{kawaguchi@cyb.sc.e.titech.ac.jp},
\author[TIT]{Hampei Sasahara}\ead{sasahara.h@cyb.sc.e.titech.ac.jp},
and
\author[TIT]{Jun-ichi Imura}\ead{imura@sc.e.titech.ac.jp}   

\address[TIT]{Tokyo Institute of Technology; 2-12-1, Ookayama, Meguro, Tokyo, 152-8552, Japan.}


\begin{keyword}\vspace{-6pt}                            
Retrofit control, Approximate modeling, Modular design, Network systems, Decentralized control.  
\end{keyword}                              

\begin{abstract}
In this paper, we develop a retrofit control method with approximate environment modeling.
Retrofit control is a modular control approach for a general stable network system whose subsystems are supposed to be managed by their corresponding subsystem operators.
From the standpoint of a single subsystem operator who performs the design of a retrofit controller, the  subsystems managed by all other operators can be regarded as an environment, the complete system model of which is assumed not to be available.
The proposed retrofit control with approximate environment modeling has an advantage that the stability of the resultant control system is robustly assured regardless of not only the stability of approximate environment models, but also the magnitude of modeling errors, provided that the network system before implementing retrofit control is originally stable.
This robustness property is practically significant to incorporate existing identification methods of unknown environments, because the accuracy of identified models may neither be reliable nor assurable in reality.
Furthermore, we conduct a control performance analysis to show that the resultant performance can be regulated by adjusting the accuracy of approximate environment modeling.
The efficiency of the proposed retrofit control is shown by numerical experiments on a network of second-order oscillators.\vspace{-6pt}
\end{abstract}\vspace{-15pt}

\end{frontmatter}


\section{Introduction}

A module is one of semi-independent parts or subsystems in an integrated system of components.
For example, in software development, a unit of programs that can be handled by an individual developer is called a software module \cite{parnas1972criteria}.
As pointed out in a broad range of literature \cite{huang1998modularity,schilling2000toward,baldwin2000design}, increasing the modularity in design is the key to developing large-scale complex systems with flexibility to meet heterogeneous demands.
Such ``modular design" enables multiple entities or subsystem operators to individually develop, modify, and replace respective modules or subsystems, serving for significant reduction of efforts to adjust and coordinate a family of integrated components.
This is a strong advantage as compared to ``integral design," where each component has strong interdependence among others \cite{ulrich2003role,ulrich2003product}.

For dynamical network systems, a modular design method of decentralized controllers has been introduced in the context of retrofit control \cite{ishizaki2018retrofit,sadamoto2018retrofit,inoue2018parameterization,sasahara2018parameterization}.
The retrofit control can be applied to a general stable network system whose subsystems are supposed to be managed by their corresponding subsystem operators.
From the standpoint of a single subsystem operator who performs the design of a retrofit controller, the subsystems managed by all other operators can be regarded as an \textit{environment}, the system model of which is assumed not to be available.
This reflects a practical situation where subsystem models, control policies, and demands of the other subsystem operators may not be public and stationary.

Most existing decentralized and distributed control methods, such as in \cite{siljak1972stability,rotkowitz2006characterization,d2003distributed,langbort2004distributed,vsiljak2005control,rantzer2015scalable}, can be classified as an integral design approach of structured controllers, where a single authority with availability of an entire system model is premised for simultaneous design of all subcontrollers constituting a decentralized or distributed controller.
In contrast, the retrofit control is classified as a modular design approach, where multiple subsystem operators are supposed to parallelly design individual retrofit controllers with accessibility only to respective subsystem models.
A retrofit controller is defined as a plug-in-type robust controller such that the stability of the resultant control system can be robustly assured for any possible variation of environments such that the original network system before implementing retrofit control is stable.
We aim at improving the resultant control performance while preserving the entire network stability.

In the line of our previous work, it is shown that such a retrofit controller from the standpoint of each subsystem operator can be designed without requiring any model of its environment; see \cite{ishizaki2018retrofit,sadamoto2018retrofit,inoue2018parameterization,sasahara2018parameterization} and references therein.
However, this, at the same time, implies that no information of an actual environment is used for  retrofit controller design.
Therefore, the resultant control performance is generally dependent on the possible variation of environments.
Such a low degree of freedom in the existing retrofit control could make a possible bottleneck for performance regulation, as will be demonstrated in this paper.
In fact, there remains a possibility to make use of some available information of environments to further improve the resultant control performance.

With this background, to reduce such a bottleneck in the existing retrofit control, we aim at developing a novel design method of retrofit controllers such that the resultant control performance can be regulated by adjusting the accuracy of approximate environment modeling.
In particular, we show that the stability of the resultant control system is robustly assured regardless of not only the stability of approximate environment models, but also the magnitude of modeling errors.
This robustness property of the proposed retrofit control is practically significant to incorporate existing  identification methods of unknown environments, such as in \cite{ljung1998system,bishop2006pattern}, because the accuracy of identified models may neither be reliable nor assurable in reality.
Furthermore, we conduct a control performance analysis to show that the foregoing bottleneck in the existing retrofit control can be reduced as improving the accuracy of environment modeling.
It should be noted that we do not explicitly discuss how to produce an approximate environment model in this paper, but we discuss how to effectively utilize an approximate environment model found by some offline identification before implementing retrofit control.

A distributed design method of decentralized controllers is developed in \cite{langbort2010distributed}, where the authors discuss the performance limitation of a linear quadratic regulator designed in a modular fashion.
This result is then generalized to the case of a network of multi-dimensional subsystems, the states of which are assumed to be fully controlled \cite{farokhi2013optimal}.
As an approach to modular design of decentralized controllers, a system decomposition method based on an integral quadratic constraint is developed in \cite{petes2017scalable}.
Though their formulation can actually frame a broad class of systems, conditions required for decomposed subsystems are generally conservative as remarked there.
As compared to these related approaches, the retrofit control has the advantage of applicability to more general stable network systems, for which we just assume the measurability of interconnection signals among subsystems.

We remark also that the retrofit control has a clear distinction from plug-and-play control \cite{stoustrup2009plug,bendtsen2013plug}, in which incremental addition of new devices, such as controllers, is considered for a working control system.
In general, the existing design schemes for plug-and-play control are not modular, meaning that an entire system model or its estimation is required for controller design.
From the viewpoint of modularity in design, we can also find a similarity with control system design based on passivity, or, more generally, dissipativity and passivity shortage \cite{willems1972dissipative,moylan1978stability,qu2014modularized}.
It is well known that negative feedback of passive subsystems retains the passivity.
This means that the stability of the entire network system can be ensured if individual subsystems are designed to be passive.
However, though a theoretically grounded procedure with modularity can be developed, the applicability of such a passivity-based approach is restrictive as compared to the retrofit control. 
This is simply because a network system of interest is not always decomposable into passive or passifiable subsystems.

The proposed retrofit control with approximate environment modeling is relevant to low-dimensional controller design based on model reduction \cite{antoulas2005approximation,obinata2001model,girard2009hierarchical}.
In particular, we can consider first applying model reduction to a system of interest, and then perform  controller design based on the resultant approximate model, where an approximation error due to model reduction can be handled as a model uncertainty in robust control \cite{zhou1995robust}.
However, such a model reduction method may not be applicable for a practical network system managed by multiple subsystem operators because a ``complete'' system model, to which model reduction is applied, is generally difficult to obtain.
In view of this, there are practical difficulties not only to find an approximate model, but also to assure the accuracy of approximate models.
The proposed retrofit control is a promising approach based on approximate modeling that can assure the control system stability without requiring the assurance of approximation accuracy.
We remark that such an unassured modeling error is not considered in a standard robust control setting.

The remainder of this paper is organized as follows.
In Section~2, we first review several existing results of retrofit control as a preliminary.
In particular, we give a motivating example that demonstrates a bottleneck for performance regulation in the existing retrofit control.
In Section~3, we develop a novel retrofit control method with approximate environment modeling. 
We first give a characterization of the new retrofit controllers by a frequency-domain analysis, based on the premise of a stability assumption of approximate environment models.
This stability assumption, making the Youla parameterization of retrofit controllers simpler, is made just for improving the visibility of a particular structure inside the retrofit controller.
Then, we provide a tractable state-space representation of approximate environment models the retrofit controllers by a time-domain analysis, which shows that the stability assumption of  premised in the frequency-domain analysis is not essential to prove the internal stability of the resultant control system.
Section~4 revisits the motivating example to show practical significance of the proposed retrofit control.
Finally, concluding remarks are provided in Section~5.

\textbf{Notation}~
The notation in this paper is generally standard: 
The identity matrix with an appropriate size is denoted by $I$.
The set of stable, proper, real rational transfer matrices is denoted by $\mathcal{RH}_{\infty}$.
For simplicity, all transfer matrices in the following are assumed to be proper and real rational.
The $\mathcal{L}_{\infty}$-norm of a transfer matrix $G$ with no singularities on the imaginary axis is denoted by $\|G\|_{\infty}$, which coincides with the $\mathcal{H}_{\infty}$-norm if $G$ is stable.
A transfer matrix $K$ is said to be a stabilizing controller for $G$ if the feedback system of $G$ and $K$ is internally stable in the standard sense \cite{zhou1995robust}.

\section{Review of Existing Retrofit Control}
\subsection{General Formulation}\label{secmf}
In this section, we first review several existing results of retrofit control reported in \cite{ishizaki2018retrofit,sadamoto2018retrofit,inoue2018parameterization,sasahara2018parameterization}.
Consider an interconnected linear system depicted in Fig.~\ref{figpresys}(a) where 
\begin{subequations}\label{prefed}
\begin{equation}\label{locsys}
\left[\hspace{-2pt}
\begin{array}{c}
w\\
z\\
y
\end{array}
\hspace{-2pt}\right]=
\underbrace{\left[\hspace{-2pt}
\begin{array}{ccc}
G_{wv}&G_{wd}&G_{wu}\\
G_{zv}&G_{zd}&G_{zu}\\
G_{yv}&G_{yd}&G_{yu}
\end{array}
\hspace{-2pt}\right]
}_{G}
\left[\hspace{-2pt}
\begin{array}{c}
v\\
d\\
u
\end{array}
\hspace{-2pt}\right]
\end{equation}
is referred to as a \textit{subsystem} of interest for retrofit control, and
\begin{equation}\label{envsys}
v=\overline{G}w
\end{equation}
is referred to as its \textit{environment}.
\end{subequations}
From the viewpoint of controlling a general network system composed of multiple subsystems,  $G$ corresponds to a particular subsystem for which a retrofit controller is designed by a subsystem operator, while the environment $\overline{G}$ corresponds to a lumped representation of all other subsystems, which can be high-dimensional.
In this formulation, the subsystem model of $G$ is assumed to be available for the retrofit controller design, while that of $\overline{G}$, which can be affected by other subsystem operators, is assumed to be unknown.

We denote the interconnection signals between the subsystem and its environment by $w$ and $v$, the evaluation output and disturbance input by $z$ and $d$, and the measurement output and control input by $y$ and $u$, respectively.
For the subsequent discussion, we use symbols denoting the submatrices of $G$, for example, as
\begin{equation}\label{defGs}
\begin{array}{rcl}
G_{(z,y)(d,u)}&:=&\left[\hspace{-2pt}
\begin{array}{cc}
G_{zd}&G_{zu}\\
G_{yd}&G_{yu}
\end{array}
\hspace{-2pt}\right],\quad
G_{(z,y)v}:=\left[\hspace{-2pt}
\begin{array}{cc}
G_{zv}\\
G_{yv}
\end{array}
\hspace{-2pt}\right],\vspace{2pt}\\
G_{w(d,u)}&:=&\left[\hspace{-2pt}
\begin{array}{cc}
G_{wd}&G_{wu}\\
\end{array}
\hspace{-2pt}\right].
\end{array}
\end{equation}
Then, we introduce the transfer matrix
$G_{\rm pre}:(d,u)\mapsto(z,y)$
defined by the feedback system of $G$ and $\overline{G}$ as
\begin{equation}\label{presys}
G_{\rm pre}:=G_{(z,y)(d,u)}+G_{(z,y)v}\overline{G}(I-G_{wv}\overline{G})^{-1}G_{w(d,u)}.
\end{equation}
We refer to $G_{\rm pre}$ as a \textit{preexisting system}, described as the dotted box in Fig.~\ref{figpresys}(a).
Based on this system description, the notion of retrofit controllers is defined as follows.

\begin{figure}[t]
\begin{center}
\includegraphics[width=85mm]{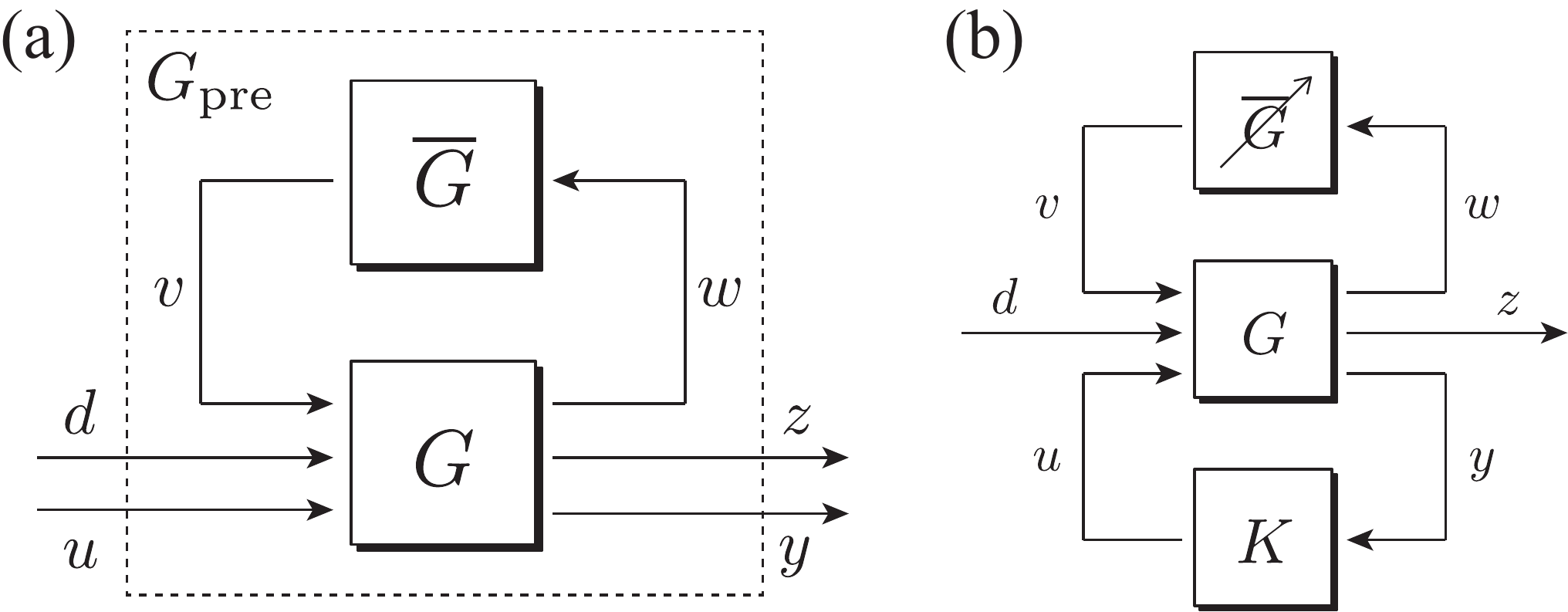}
\end{center}
\vspace{3pt}
\caption{
(a) Preexisting system composed of a subsystem of interest and its environment.
(b) Resultant control system.}
\label{figpresys}
\vspace{6pt}
\end{figure}

\begin{definition}\label{defretf}\normalfont
For the preexisting system $G_{\rm pre}$ in (\ref{presys}), define the set of all admissible environments as
\begin{equation}
\overline{\mathcal{G}}:=\left\{\overline{G}:G_{\rm pre}{\rm\ is\ internally\ stable}\right\}.
\end{equation}
An output feedback controller
\begin{equation}\label{cont}
u=Ky
\end{equation}
is said to be a \textit{retrofit controller} if the resultant control system in Fig.~\ref{figpresys}(b) is internally stable for any possible environment $\overline{G}\in\overline{\mathcal{G}}$.
\end{definition}

The retrofit controller is defined as a plug-in-type robust controller such that the stability of the resultant control system can be robustly assured for any possible variation of environments such that the preexisting system is stable.
We remark that the norm bound of the environment $\overline{G}$ is not premised.
Instead, we just premise the internal stability of the preexisting system.
Based on this definition, we first consider giving a parameterization of retrofit controllers.
To avoid unnecessary complication of controller parameterization based on the Youla parameterization \cite{youla1976modern,doyle2013feedback}, we make the following assumption.

\begin{assumption}\label{asstaG}\normalfont
The subsystem $G$ in (\ref{locsys}) is stable.
\end{assumption}

Assumption~\ref{asstaG} is made just for improving the visibility of a particular structure inside the retrofit controller, but it is not crucial to prove the resultant control system stability, as will be shown in Theorem~\ref{thmtdr}.
Then, we can derive the following parameterization of retrofit controllers.

\begin{proposition}\label{propgen}\normalfont
Let Assumption~\ref{asstaG} hold.
Consider the Youla parameterization of $K$ in (\ref{cont}) as
\begin{equation}\label{Ypara}
K=(I+QG_{yu})^{-1}Q,\quad Q\in \mathcal{RH}_{\infty}
\end{equation}
where $Q$ is the Youla parameter.
If 
\begin{equation}\label{cyoulao}
QG_{yv}=0,
\end{equation}
then $K$ is a retrofit controller.
\end{proposition}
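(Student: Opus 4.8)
The plan is to exploit the \emph{rectifier} structure hidden in the Youla parameterization (\ref{Ypara}) and to show that the algebraic condition (\ref{cyoulao}) severs the feedback path that couples the controller to the unknown environment. First I would rewrite the control law: multiplying (\ref{Ypara}) out, the relation $u=Ky$ is equivalent to $(I+QG_{yu})u=Qy$, i.e. $u=Q(y-G_{yu}u)$. Introducing the internal signal $r:=y-G_{yu}u$, the controller is thus realized as $u=Qr$ acting on the \emph{rectified} output $r$. Substituting the $y$-row of (\ref{locsys}), namely $y=G_{yv}v+G_{yd}d+G_{yu}u$, the direct term $G_{yu}u$ cancels and one obtains the key identity $r=G_{yv}v+G_{yd}d$, which is independent of $u$. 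Hence the signal seen by the Youla parameter consists only of the environment contribution $G_{yv}v$ and the disturbance contribution $G_{yd}d$.

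Next I would close the environment loop, which is legitimate because $\overline{G}\in\overline{\mathcal{G}}$ guarantees that the feedback interconnection of $G$ and $\overline{G}$ is internally stable, so $(I-\overline{G}G_{wv})^{-1}$ and every associated transfer matrix lie in $\mathcal{RH}_{\infty}$. Solving $v=\overline{G}w$ together with the $w$-row of (\ref{locsys}) yields $v=(I-\overline{G}G_{wv})^{-1}\overline{G}(G_{wd}d+G_{wu}u)$, so that $r=R_dd+R_uu$ with $R_d:=G_{yd}+G_{yv}(I-\overline{G}G_{wv})^{-1}\overline{G}G_{wd}$ and $R_u:=G_{yv}(I-\overline{G}G_{wv})^{-1}\overline{G}G_{wu}$; both are stable by Assumption~\ref{asstaG} and the admissibility of $\overline{G}$. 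Feeding this into $u=Qr$ gives $(I-QR_u)u=QR_dd$. The decisive observation is that $QR_u=QG_{yv}(I-\overline{G}G_{wv})^{-1}\overline{G}G_{wu}$, so condition (\ref{cyoulao}) forces $QR_u=0$. The closed loop therefore collapses to $u=QR_dd$, a stable map from $d$ alone: the controller output is completely decoupled from $v$ and hence from the environment.

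Finally I would assemble internal stability of the resultant system of Fig.~\ref{figpresys}(b). Because the offending loop gain $QR_u$ vanishes, the interconnection reduces to a cascade of stable maps: $d\mapsto u$ through $QR_d$, and then $(d,u)\mapsto(v,w,y,z,r)$ through the internally stable $G$--$\overline{G}$ feedback guaranteed by $\overline{G}\in\overline{\mathcal{G}}$. The rectifier realization of (\ref{Ypara}) forms $r=y-G_{yu}u$, so the internal copy of $G_{yu}$ cancels the plant term $G_{yu}u$; this cancellation involves only modes of $G_{yu}$, which are stable by Assumption~\ref{asstaG}, and is therefore harmless for internal stability. To make the argument rigorous I would inject independent test signals at each breaking point of the two interconnections (the $w$--$v$ channel and the $y$--$u$ channel) and verify that all resulting closed-loop transfer matrices lie in $\mathcal{RH}_{\infty}$; every such map factors through either the admissible-environment loop or the stable parameter $Q$, with no surviving algebraic loop once $QG_{yv}=0$. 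I expect the main obstacle to be exactly this bookkeeping: confirming well-posedness and stability of the full three-block, two-loop interconnection rather than merely the input--output response to $d$, the point being that condition (\ref{cyoulao}) is precisely what annihilates the only loop gain, $QR_u$, capable of transmitting the unknown and possibly destabilizing environment into the controller.
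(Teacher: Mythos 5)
Your proof is correct. The central step --- realizing the Youla-parameterized controller as $u=Q(y-G_{yu}u)$ so that the signal entering $Q$ is exactly $r=G_{yv}v+G_{yd}d$, whence $QG_{yv}=0$ annihilates the environment path --- is sound, and in fact your detour through $R_u$ is unnecessary: since $QG_{yv}=0$ holds as a transfer-matrix identity, $QG_{yv}v=0$ for whatever $v$ turns out to be, so $u=QG_{yd}d$ follows before you ever solve the environment loop. Your internal-stability bookkeeping also goes through: with injections at the $u$, $y$, and $w$--$v$ break points, $u$ remains a stable function of the exogenous signals (because $Q$, $G_{yu}$, $G_{yd}$ are stable by Assumption~\ref{asstaG}), all remaining signals are then produced by the internally stable preexisting $G$--$\overline{G}$ loop driven by stable inputs, and the cancellation inside the controller involves only the stable modes of $G_{yu}$. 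As for the comparison: the paper never proves Proposition~\ref{propgen} directly --- it defers to its references for the general parameterization $G_{wu}QG_{yv}=0$, and its own machinery for the later results (Proposition~\ref{propost}, Theorems~\ref{thmost} and \ref{thmtdr}) is different, namely the factorization $Q=\hat{Q}R$ through the output rectifier (which requires measurability of $v$, Assumption~\ref{assumvy}, not available in the setting of Proposition~\ref{propgen}) and, in the time domain, the coordinate transformation $\hat{\xi}=x-\hat{x}$, $\check{\xi}=\hat{x}$ exhibiting the closed loop as a cascade of a stable upstream and a stable downstream system. Your argument reaches the same structural conclusion (the environment-to-controller path is severed, so the closed loop is a cascade of stable blocks) by direct signal manipulation, which buys a self-contained, elementary sufficiency proof that does not need $v$ to be measured; what it does not deliver is the explicit cascade transfer-matrix form that the paper's machinery provides, which the paper later exploits for the performance bounds (Proposition~\ref{propbnd}, Theorem~\ref{thmbnd}) and for the necessity (``only if'') direction of its parameterization results.
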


Proposition~\ref{propgen} shows that the constrained version of the Youla parameterization in (\ref{cyoulao}) gives the parametrization of retrofit controllers.
We remark that, more generally, ``all" retrofit controllers can also be parameterized by the Youla parameterization such that 
\begin{equation}\label{gengqg}
G_{wu}QG_{yv}=0,
\end{equation}
which is more general than (\ref{cyoulao}).
This general parameterization, derived in \cite{inoue2018parameterization,sasahara2018parameterization}, further shows that the retrofit controller in Definition~\ref{defretf} can be characterized as a controller such that the transfer matrix from $v$ to $w$ of the local control system isolated from the environment is kept invariant.
This can be seen as follows.
Let $G_{wv}^{\prime}:v\mapsto w$ denote the transfer matrix from $v$ to $w$ in Fig.~\ref{figpresys}(b) as removing the block of $\overline{G}$.
Then, we have
\begin{equation}\label{intinv}
G_{wv}^{\prime}=G_{wv}+G_{wu}QG_{yv},
\end{equation}
which implies that $G_{wv}^{\prime}=G_{wv}$ for any retrofit controller because of (\ref{gengqg}).

In the rest of this paper, as following the terminology used in \cite{ishizaki2018retrofit,sadamoto2018retrofit,inoue2018parameterization,sasahara2018parameterization}, we refer to a retrofit controller $K$ parameterized in Proposition~\ref{propgen} as an \textit{output-rectifying retrofit controller}.
Note that such retrofit controllers are conditioned by the transfer matrices $G_{yu}$ and $G_{yv}$ that are relevant to the subsystem $G$, but not to the environment $\overline{G}$.
In other words, an output-rectifying retrofit controller can be designed only with the model information of $G$ isolated from $\overline{G}$.

\vspace{3pt}
\fboxrule=2\fboxrule
\begin{breakbox}
\noindent\underline{\textbf{NOTE}}~
One may think that the environment $\overline{G}$ can be regarded as a model uncertainty in robust control.
However, such a model uncertainty is typically assumed to be norm-bounded in a standard robust control setting.
In contrast, instead of assuming the norm bound of $\overline{G}$, the stability of the preexisting system, i.e., the stability of the feedback system of $G$ and $\overline{G}$ before controller implementation, is premised in the formulation of retrofit control.
In fact, this formulation leads to a particular class of controllers such that the interconnection transfer matrix is kept invariant as shown by (\ref{gengqg}) and (\ref{intinv}).
This further implies that the retrofit control does not aim at decoupling the subsystem $G$ from the environment $\overline{G}$, but it ``preserves" the dynamics with respect to the interconnection of $G$ and $\overline{G}$, the stability of which is premised as the preexisting system stability.
Therefore, we clearly see that the policy of retrofit control is essentially different from those of standard robust control \cite{zhou1995robust}, decoupling control \cite{falb1967decoupling}, and disturbance rejection (interconnection signal rejection) control \cite{wang2013distributed} in the literature.
\end{breakbox}\vspace{3pt}

\subsection{Review of Output-Rectifying Retrofit Control with Interconnection Signal Measurement}\label{secrevrc}

In this subsection, as a preliminary for the main theoretical developments in Section~\ref{sectd}, we describe specific results on output-rectifying retrofit control in Proposition~\ref{propgen}.
These results can be derived as a simple generalization of results in \cite{ishizaki2018retrofit,sadamoto2018retrofit,inoue2018parameterization,sasahara2018parameterization}, but are not exactly the same as those.
In the rest of this paper, we consider the following situation.

\begin{assumption}\label{assumvy}\normalfont
The interconnection signal $v$ and $w$ are measurable in addition to the measurement output $y$ in (\ref{prefed}).
\end{assumption}

From a symbolic viewpoint, Assumption~\ref{assumvy} corresponds to the situation where every symbol $y$ in the discussion of Section~\ref{secmf} is to be replaced with the new measurement output $(y,w,v)$.
Based on this premise, the transfer matrices in (\ref{locsys}) relevant to $y$ are also redefined.
For example, $G_{yv}$ and $G_{yu}$ are redefined as
\[
G_{(y,w,v)v}:=\left[\hspace{-2pt}
\begin{array}{c}
G_{yv}\\
G_{wv}\\
I
\end{array}
\hspace{-2pt}\right],\quad
G_{(y,w,v)u}:=\left[\hspace{-2pt}
\begin{array}{c}
G_{yu}\\
G_{wu}\\
0
\end{array}
\hspace{-2pt}\right].
\]
Furthermore, the controller $K$ in (\ref{cont}) is also redefined as
\begin{equation}\label{retcont}
u=\underbrace{\left[\hspace{-2pt}
\begin{array}{ccc}
K_{y}&K_{w}&K_{v}
\end{array}
\hspace{-2pt}\right]}_{K}
\left[\hspace{-2pt}
\begin{array}{c}
y\\
w\\
v
\end{array}
\hspace{-2pt}\right],
\end{equation}
the Youla parameterization of which can be written as
\[
K=(I+QG_{(y,w,v)u})^{-1}Q,\quad Q\in \mathcal{RH}_{\infty}.
\]
A simple but notable fact to be used is that 
\begin{equation}\label{outrec}
R:=\left[\hspace{-2pt}
\begin{array}{ccc}
I&0&-G_{yv}\\
0&I&-G_{wv}\end{array}
\hspace{-2pt}\right]
\end{equation}
is a basis of the left kernel of $G_{(y,w,v)v}$ in $\mathcal{RH}_{\infty}$, i.e., 
\begin{equation}\label{factQ}
\begin{array}{l}
QG_{(y,w,v)v}=0,\quad Q\in \mathcal{RH}_{\infty}\\
\hspace{60pt}\Longleftrightarrow \quad\exists\hat{Q}\in \mathcal{RH}_{\infty}\ \ {\rm s.t.}\ \ Q=\hat{Q}R.
\end{array}
\end{equation}
Using this left kernel basis $R$, we can rewrite (\ref{Ypara}) and (\ref{cyoulao}) as
\[
K=\hat{K}R,\quad\hat{K}=(I+\hat{Q}G_{(y,w)u})^{-1}\hat{Q},\quad\hat{Q}\in \mathcal{RH}_{\infty},
\]
where we have used the fact that
\begin{equation}\label{GRG}
G_{(y,w)u}=RG_{(y,w,v)u}.
\end{equation}
We refer to $R:(y,w,v)\mapsto(\hat{y},\hat{w})$ as an \textit{output rectifier}, the name of which is based on the fact that the measurement output $(y,w,v)$ is rectified in such a way that
\begin{equation}\label{orect}
\hat{y}=y-G_{yv}v,\quad\hat{w}=w-G_{wv}v.
\end{equation}
The rectified output $(\hat{y},\hat{w})$ is used as the input of $\hat{K}$, which is referred to as a \textit{module controller}.
This discussion leads to an ``explicit'' representation of all output-rectifying retrofit controllers as follows.

\begin{proposition}\label{propost}\normalfont
Let Assumptions~\ref{asstaG} and \ref{assumvy} hold.
Then, $K$ in (\ref{retcont}) is an output-rectifying retrofit controller if and only if
\begin{equation}\label{stoutret}
K=\hat{K}R
\end{equation}
where $\hat{K}$ is a stabilizing controller for $G_{(y,w)u}$ and $R$ is defined as in (\ref{outrec}).
Furthermore, the block diagram of the resultant control system is depicted as in Fig.~\ref{figorigsigflow}, i.e., the entire map $T_{zd}:d\mapsto z$ is given as
\begin{equation}
T_{zd}=\hat{T}_{zd}(\hat{K})+G_{zv}(I-\overline{G}G_{wv})^{-1}\overline{G}\hat{T}_{wd}(\hat{K})
\end{equation}
where $\hat{T}_{zd}:d\mapsto\hat{z}$ and $\hat{T}_{wd}:d\mapsto\hat{w}$ denote the transfer matrices compatible with Fig.~\ref{figorigsigflow}, given as
\begin{equation}
\begin{array}{rcl}
\hat{T}_{zd}(\hat{K})&:=&G_{zd}+G_{zu}\hat{K}(I-G_{(y,w)u}\hat{K})^{-1}G_{(y,w)d}\\
\hat{T}_{wd}(\hat{K})&:=&G_{wd}+G_{wu}\hat{K}(I-G_{(y,w)u}\hat{K})^{-1}G_{(y,w)d}.
\end{array}
\end{equation}
\end{proposition}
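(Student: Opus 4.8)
The plan is to prove the two assertions separately: the if-and-only-if characterization (\ref{stoutret}), and then the reduction giving the closed form of $T_{zd}$.

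For the characterization, I would begin from the definition of an output-rectifying retrofit controller, i.e., the parameterization of Proposition~\ref{propgen} specialized to the augmented measurement $(y,w,v)$ of Assumption~\ref{assumvy}, so that $K=(I+QG_{(y,w,v)u})^{-1}Q$ with $Q\in\mathcal{RH}_{\infty}$ constrained by $QG_{(y,w,v)v}=0$. Applying the left-kernel factorization (\ref{factQ}) rewrites the constraint as $Q=\hat{Q}R$ with $\hat{Q}\in\mathcal{RH}_{\infty}$, and substituting this together with (\ref{GRG}) collapses $K$ to $K=(I+\hat{Q}G_{(y,w)u})^{-1}\hat{Q}R=\hat{K}R$ where $\hat{K}=(I+\hat{Q}G_{(y,w)u})^{-1}\hat{Q}$. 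Since $G_{(y,w)u}$ is stable under Assumption~\ref{asstaG}, the standard Youla parameterization of a stable plant guarantees that the map $\hat{Q}\mapsto\hat{K}$ ranges over precisely the set of all stabilizing controllers for $G_{(y,w)u}$; this equivalence is exactly what upgrades the chain of implications into the desired if-and-only-if. Conversely, any $K=\hat{K}R$ with stabilizing $\hat{K}$ satisfies the constraint because $RG_{(y,w,v)v}=0$, so it lies in the class of Proposition~\ref{propgen}.

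For the closed-form expression, the pivotal observation is that the output rectifier cancels the environment signal entering the module controller. Writing the rows of $G$ in (\ref{locsys}) and inserting the rectified outputs (\ref{orect}), the terms $G_{yv}v$ and $G_{wv}v$ cancel, so that $(\hat{y},\hat{w})=G_{(y,w)d}d+G_{(y,w)u}u$ carries no dependence on $v$. Hence the control law $u=\hat{K}(\hat{y},\hat{w})$ solves to $u=(I-\hat{K}G_{(y,w)u})^{-1}\hat{K}G_{(y,w)d}\,d$, a function of $d$ alone. Substituting this $u$ into the rows for $z$ and $w$, and using the push-through identity $\hat{K}(I-G_{(y,w)u}\hat{K})^{-1}=(I-\hat{K}G_{(y,w)u})^{-1}\hat{K}$, yields $z=G_{zv}v+\hat{T}_{zd}(\hat{K})d$ and $w=G_{wv}v+\hat{T}_{wd}(\hat{K})d$; in particular $v\mapsto z$ equals $G_{zv}$ and $v\mapsto w$ equals $G_{wv}$, which recovers the interconnection invariance (\ref{intinv}) and matches the signal flow of Fig.~\ref{figorigsigflow}.

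It then remains only to close the environment loop $v=\overline{G}w$. From $w=G_{wv}v+\hat{T}_{wd}(\hat{K})d$ I would solve $w=(I-G_{wv}\overline{G})^{-1}\hat{T}_{wd}(\hat{K})d$, insert $v=\overline{G}w$ into $z=G_{zv}v+\hat{T}_{zd}(\hat{K})d$, and apply the push-through identity $\overline{G}(I-G_{wv}\overline{G})^{-1}=(I-\overline{G}G_{wv})^{-1}\overline{G}$ to arrive at the claimed $T_{zd}=\hat{T}_{zd}(\hat{K})+G_{zv}(I-\overline{G}G_{wv})^{-1}\overline{G}\,\hat{T}_{wd}(\hat{K})$. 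I expect the main obstacle to be the completeness half of the Youla argument, namely verifying that every stabilizing controller $\hat{K}$ for $G_{(y,w)u}$ genuinely arises from some admissible $\hat{Q}\in\mathcal{RH}_{\infty}$, since this is what makes the only-if direction of the characterization valid; the remaining steps are routine once the push-through identities and the well-posedness of $(I-\hat{K}G_{(y,w)u})^{-1}$ and $(I-G_{wv}\overline{G})^{-1}$ are checked.
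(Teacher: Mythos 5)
Your proof is correct, and it splits into two halves that relate differently to the paper. The characterization half ($K=\hat{K}R$ with $\hat{K}$ stabilizing $G_{(y,w)u}$, if and only if) follows exactly the paper's own development: the constrained Youla parameterization of Proposition~\ref{propgen}, the kernel factorization (\ref{factQ}), the identity (\ref{GRG}), and the standard Youla correspondence $\hat{Q}\leftrightarrow\hat{K}$ for the stable plant $G_{(y,w)u}$ guaranteed by Assumption~\ref{asstaG}. Incidentally, the "completeness" step you flag as the main obstacle is not one: since $G_{(y,w)u}\in\mathcal{RH}_{\infty}$, every stabilizing $\hat{K}$ yields $\hat{Q}=\hat{K}(I-G_{(y,w)u}\hat{K})^{-1}\in\mathcal{RH}_{\infty}$, which is precisely the textbook Youla fact for stable plants that the assumption was introduced to enable. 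Where you genuinely depart from the paper is the closed-loop formula for $T_{zd}$. The paper does not prove Proposition~\ref{propost} directly; it treats it as the special case $\overline{G}_{\rm apx}=0$ (hence $X=I$, $G^{+}=G$, $\Delta=\overline{G}$) of Theorem~\ref{thmost}, whose proof imports the expression of $T_{zd}$ for an arbitrary output-rectifying retrofit controller from prior work and then matches it to the cascade form through push-through identities such as (\ref{relPK1}). You instead derive the formula from scratch by signal elimination: the rectifier cancels the $v$-dependence of $(\hat{y},\hat{w})$, so $u$ is a function of $d$ alone, the maps $v\mapsto z$ and $v\mapsto w$ stay equal to $G_{zv}$ and $G_{wv}$ (recovering the invariance (\ref{intinv})), and closing the loop $v=\overline{G}w$ gives the claimed $T_{zd}$. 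Your route is self-contained and makes the cascade structure of Fig.~\ref{figorigsigflow} transparent for this special case; the paper's route buys uniformity, since the same argument with $X\neq I$ covers the general approximate-model setting of Theorem~\ref{thmost} without redoing the elimination.
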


\begin{figure}[t]
\begin{center}
\includegraphics[width=80mm]{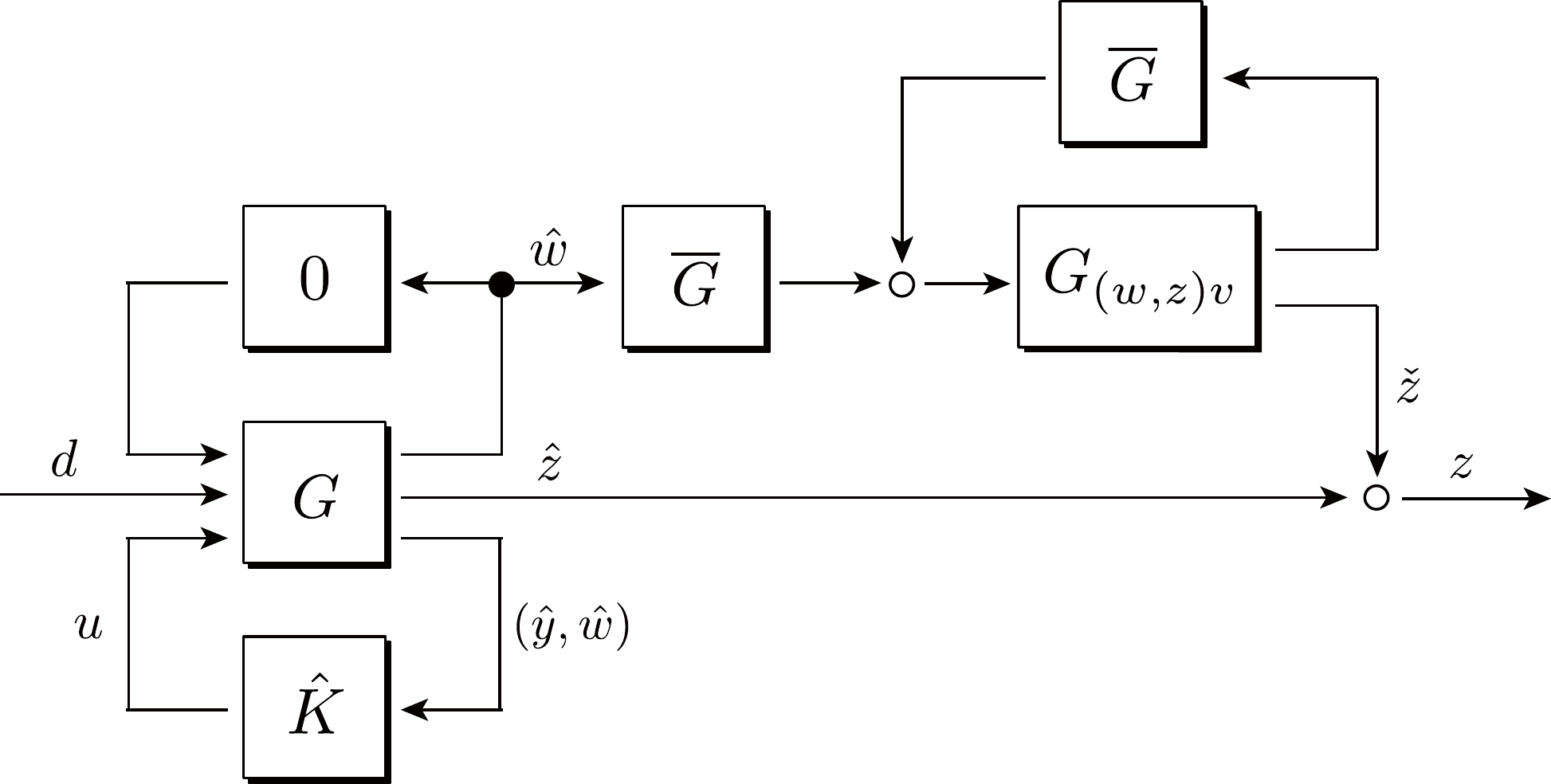}
\end{center}
\vspace{-0pt}
\caption{
Block diagram of existing retrofit control.}
\label{figorigsigflow}
\vspace{3pt}
\end{figure}

Proposition~\ref{propost} can be proven as a special case of Theorem~\ref{thmost} below.
It is shown that, if the module controller $\hat{K}$ is designed as a stabilizing controller for $G_{(y,w)u}$, which is isolated from $\overline{G}$, then the stability of the resultant control system is always assured by $K$ in (\ref{stoutret}).
This clearly shows the modularity of retrofit controller design; the model information of $\overline{G}$ is not required to assure, at least, the stability of the resultant control system.

Next, we analyze the resultant control performance.
From the block diagram in Fig.~\ref{figorigsigflow}, we see that $z$ can be decomposed as
$z=\hat{z}+\check{z}$ where
\[
\hat{z}=\hat{T}_{zd}(\hat{K})d,\quad
\check{z}=G_{zv}(I-\overline{G}G_{wv})^{-1}\overline{G}\hat{T}_{wd}(\hat{K})d.
\]
The triangular inequality for the induced norm of $z$ leads to the following upper and lower bounds of the resultant control performance.

\begin{proposition}\label{propbnd}\normalfont
With the same notation as that in Proposition~\ref{propost}, the resultant control performance is bounded as
\begin{equation}\label{Tul}
|\check{\gamma}-\hat{\gamma}|\leq\|T_{zd}\|_{\infty}\leq\hat{\gamma}+\check{\gamma}
\end{equation}
where the induced gains of $\hat{z}$ and $\check{z}$ are given as
\[
\hat{\gamma}:=\|\hat{T}_{zd}(\hat{K})\|_{\infty},\quad \check{\gamma}:=\|G_{zv}(I-\overline{G}G_{wv})^{-1}\overline{G}\hat{T}_{wd}(\hat{K})\|_{\infty}.
\]
\end{proposition}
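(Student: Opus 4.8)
The plan is to read the claim directly off the additive structure of $T_{zd}$ established in Proposition~\ref{propost} and to invoke only the elementary properties of the $\mathcal{H}_{\infty}$-norm as a genuine norm on $\mathcal{RH}_{\infty}$. Writing $A:=\hat{T}_{zd}(\hat{K})$ and $B:=G_{zv}(I-\overline{G}G_{wv})^{-1}\overline{G}\hat{T}_{wd}(\hat{K})$, we have $T_{zd}=A+B$ with $\hat{\gamma}=\|A\|_{\infty}$ and $\check{\gamma}=\|B\|_{\infty}$ by definition. The entire argument then reduces to applying the forward and reverse triangle inequalities to this sum.

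First I would confirm that $A$, $B$, and hence $T_{zd}=A+B$ all lie in $\mathcal{RH}_{\infty}$, so that their $\mathcal{H}_{\infty}$-norms are finite and coincide with the induced $\mathcal{L}_{2}$-operator norm. This is precisely the internal stability of the resultant control system, which holds for every admissible environment $\overline{G}\in\overline{\mathcal{G}}$ once $\hat{K}$ is a stabilizing controller for $G_{(y,w)u}$, as guaranteed by Proposition~\ref{propost}. With this in place, the $\mathcal{H}_{\infty}$-norm is a norm in the usual sense. For the upper bound, the triangle inequality gives $\|T_{zd}\|_{\infty}=\|A+B\|_{\infty}\le\|A\|_{\infty}+\|B\|_{\infty}=\hat{\gamma}+\check{\gamma}$. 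For the lower bound, the reverse triangle inequality gives $\|T_{zd}\|_{\infty}=\|A+B\|_{\infty}\ge\bigl|\,\|A\|_{\infty}-\|B\|_{\infty}\,\bigr|=|\hat{\gamma}-\check{\gamma}|=|\check{\gamma}-\hat{\gamma}|$, which is exactly (\ref{Tul}).

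The hard part here is genuinely minor: there is no real obstacle beyond recording the stability prerequisite so that the norm inequalities are legitimately available. The only point deserving a word of care is the lower bound, which relies on the reverse triangle inequality $\|A+B\|_{\infty}\ge\|A\|_{\infty}-\|B\|_{\infty}$ together with the symmetric statement obtained by swapping $A$ and $B$; both follow immediately from subadditivity applied to $A=(A+B)+(-B)$ and $B=(A+B)+(-A)$, so no estimate specific to the interconnection structure of $\check{z}$ is required. I would therefore keep the write-up to a single line for each bound, after the stability remark.
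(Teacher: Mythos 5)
Your proposal is correct and matches the paper's own reasoning: the paper derives this proposition exactly by decomposing $z=\hat{z}+\check{z}$ via Proposition~\ref{propost} and invoking the triangle inequality (and its reverse) for the induced norm, which is precisely your argument with $A=\hat{T}_{zd}(\hat{K})$ and $B=G_{zv}(I-\overline{G}G_{wv})^{-1}\overline{G}\hat{T}_{wd}(\hat{K})$. Your explicit remark that internal stability (guaranteed by Proposition~\ref{propost}) is what makes the $\mathcal{H}_{\infty}$-norms finite is a small but welcome addition the paper leaves implicit.
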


We remark that $\hat{\gamma}$ is ``directly regulatable'' by a suitable choice of $\hat{K}$, but $\check{\gamma}$ is not because the term dependent on $\overline{G}$ is involved.
For explanation, let us consider a situation where $\hat{\gamma}$ is made sufficiently small, but $\check{\gamma}$ is not, i.e., $\hat{\gamma}\ll\check{\gamma}$.
Then, (\ref{Tul}) implies that $\|T_{zd}\|_{\infty}\simeq\check{\gamma}$.
This means that actual control performance may not be satisfactory, even if $\hat{\gamma}$ is regulated desirably.
Such an undesirable situation possibly arises when the magnitude of $\overline{G}$ is large.

From the observation above, we can see that a large value of $\check{\gamma}$ makes a ``bottleneck'' to perform satisfactory regulation based on the existing retrofit control.
We can say that the value of $\check{\gamma}$ evaluates a gap between $\|T_{zd}\|_{\infty}$ and $\hat{\gamma}$, each of which corresponds to the ``actual performance level'' of the resultant control system and the ``assumed performance level'' of the modular control system.
The simplest but not realistic situation for the minimum gap is $\overline{G}=0$, which leads to the ideal situation where $z=\hat{z}$, or equivalently, $\|T_{zd}\|_{\infty}=\hat{\gamma}$, i.e., the actual performance level is equal to the assumed performance level.

\subsection{Motivating Example}\label{secmot}

\begin{figure}[t]
\begin{center}
\includegraphics[width=80mm]{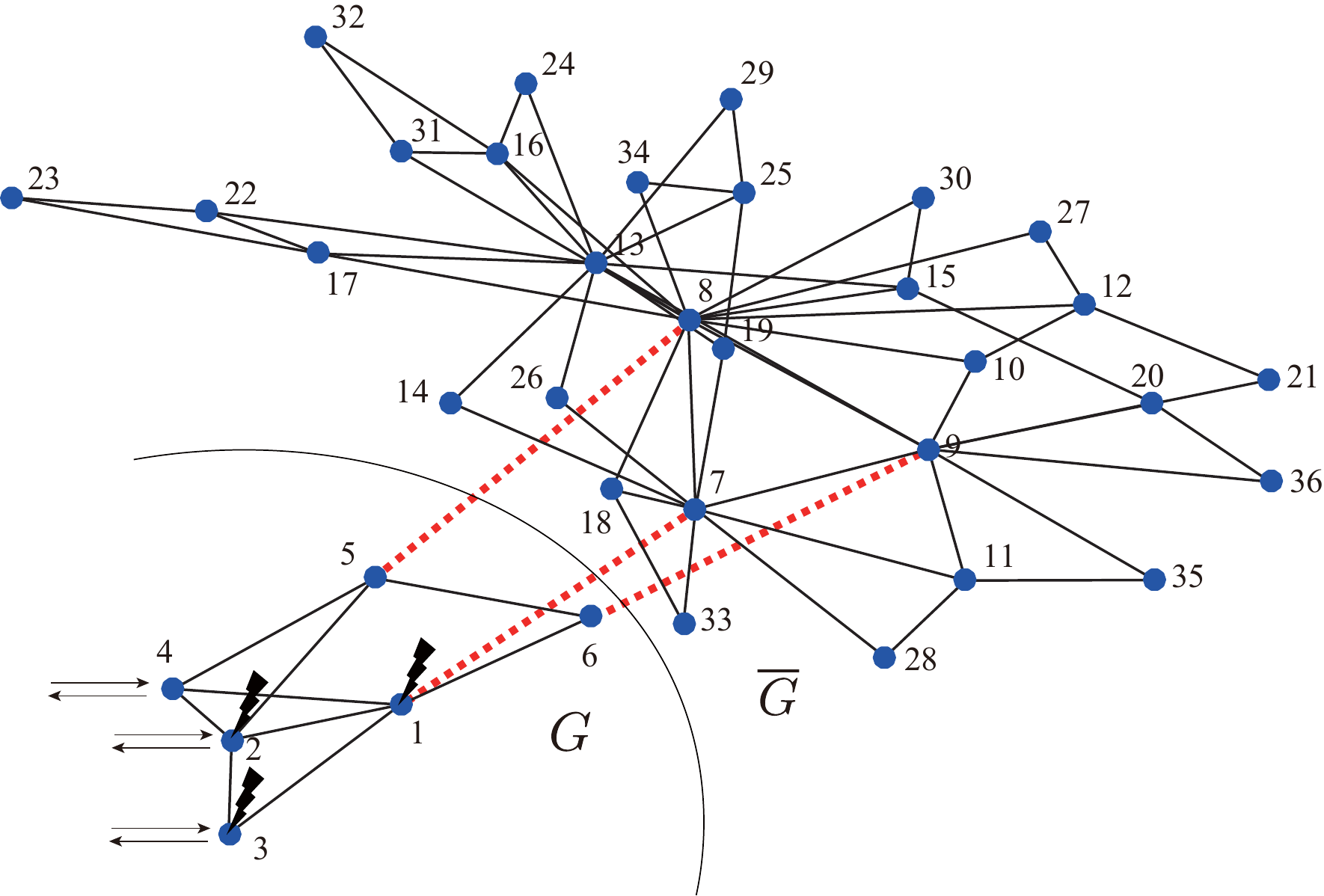}
\end{center}
\vspace{-0pt}
\caption{
Second-order oscillator network composed of 36 nodes. 
The subnetwork of the first six nodes corresponds to a subsystem of interest, and the remaining part is its unknown environment.
}
\label{fignetwork}
\vspace{3pt}
\end{figure}

We give a motivating example that demonstrates the bottleneck of the existing retrofit control described in Section~\ref{secrevrc}, towards highlighting the main contribution of this paper.
Consider a network system composed of 36 nodes depicted in Fig~\ref{fignetwork}.
The subnetwork of the first six nodes is supposed to be a subsystem $G$ of interest, and the remaining part is supposed to be its unknown environment $\overline{G}$.
Let $\mathcal{I}$ and $\overline{\mathcal{I}}$ denote the label sets corresponding to $G$ and $\overline{G}$, i.e.,
\[
\mathcal{I}=\{1,2,\ldots,6\},\quad\overline{\mathcal{I}}=\{7,8,\ldots,36\}.
\]
Furthermore, let $\mathcal{N}_{i}$ denote the label set corresponding to the set of nodes such that they are adjacent to the $i$th node and involved in $\mathcal{I}$. 
In a similar fashion, let $\overline{\mathcal{N}_{i}}$ denote the label set of the other adjacent nodes involved in $\overline{\mathcal{I}}$.
With this notation, for each $i\in \mathcal{I}$, the node dynamics is given as
\begin{equation}\label{secsys}
M_{i}\ddot{\theta}_{i}+D_{i}\dot{\theta}_{i}+\sum_{j\in \mathcal{N}_{i}}K_{ij}(\theta_{j}-\theta_{i})+v_{i}=u_{i}+d_{i}
\end{equation}
where $\theta_{i}$ denotes the angular state, $u_{i}$ denotes the control input, $d_{i}$ denotes the disturbance input, and
\begin{equation}\label{intvi}
v_{i}=\sum_{j\in\overline{\mathcal{N}}_{i}}K_{ij}(\theta_{j}-\theta_{i})
\end{equation}
denotes the interconnection signal from the environment.
The node dynamics of the environment is given in the same fashion without the terms of $u_{i}$ and $d_{i}$.
The second-order oscillator network (\ref{secsys}) can be regarded as a mechanical analog of synchronous generators \cite{ishizaki2018graph}.
In the context of power system modeling, the interconnection signal $v_{i}$ in (\ref{intvi}) corresponds to the power flow between the subsystem and its environment.
The three interconnection links are depicted by the dotted lines in Fig~\ref{fignetwork}.

In the following simulation, we set all the inertia constants and damping constants as $M_{i}=1$ and $D_{i}=0.2$.
Furthermore, we set the coupling constants inside the subsystem and inside the environment uniformly as
\[
\begin{array}{l}
K_{ij}=5,\quad\forall j\in \mathcal{N}_{i};\quad i\in \mathcal{I},\\
K_{ij}=5,\quad\forall j\in\overline{\mathcal{N}_{i}};\quad i\in\overline{\mathcal{I}}.
\end{array}
\]
The coupling constants between the subsystem and environment are to be varied as a parameter $k_{\rm c}$, i.e.,
\begin{equation}\label{kc}
K_{ij}=k_{\rm c},\quad\forall j\in\overline{\mathcal{N}_{i}};\quad i\in \mathcal{I}.
\end{equation}
For simplicity, we assume the symmetry $K_{ij}=K_{ji}$.

For retrofit controller design, the control input and the disturbance input are assigned as
\[
u=(u_{i})_{i\in\{2,3,4\}},\quad d=(d_{i})_{i\in\{1,2,3\}},
\]
respectively.
The measurement output and the evaluation output are assigned as
\[
y=(\theta_{i},\dot{\theta}_{i})_{i\in\{2,3,4\}},\quad z=(\dot{\theta}_{i})_{i\in\{1,2,\ldots,6\}},
\]
respectively.
In addition to $y$, the interconnection signals 
\[
v=(v_{i})_{i\in\{1,5,6\}},\quad w=(\theta_{i})_{i\in\{1,5,6\}}
\]
are assumed to be measurable.
We remark that only the local model parameters $K_{ij}$ for $(i,j)\in \mathcal{I}\times \mathcal{I}$ and $M_{i},\ D_{i}$ for $i\in \mathcal{I}$ are assumed to be available.
The entire network system is originally stable for any nonnegative value of $k_{\rm c}$ in (\ref{kc}). 
Though the system has  a single zero eigenvalue, it does not matter because the corresponding eigenspace is unobservable from the evaluation output $z$.

\begin{figure*}[t]
\begin{center}
\includegraphics[width=140mm]{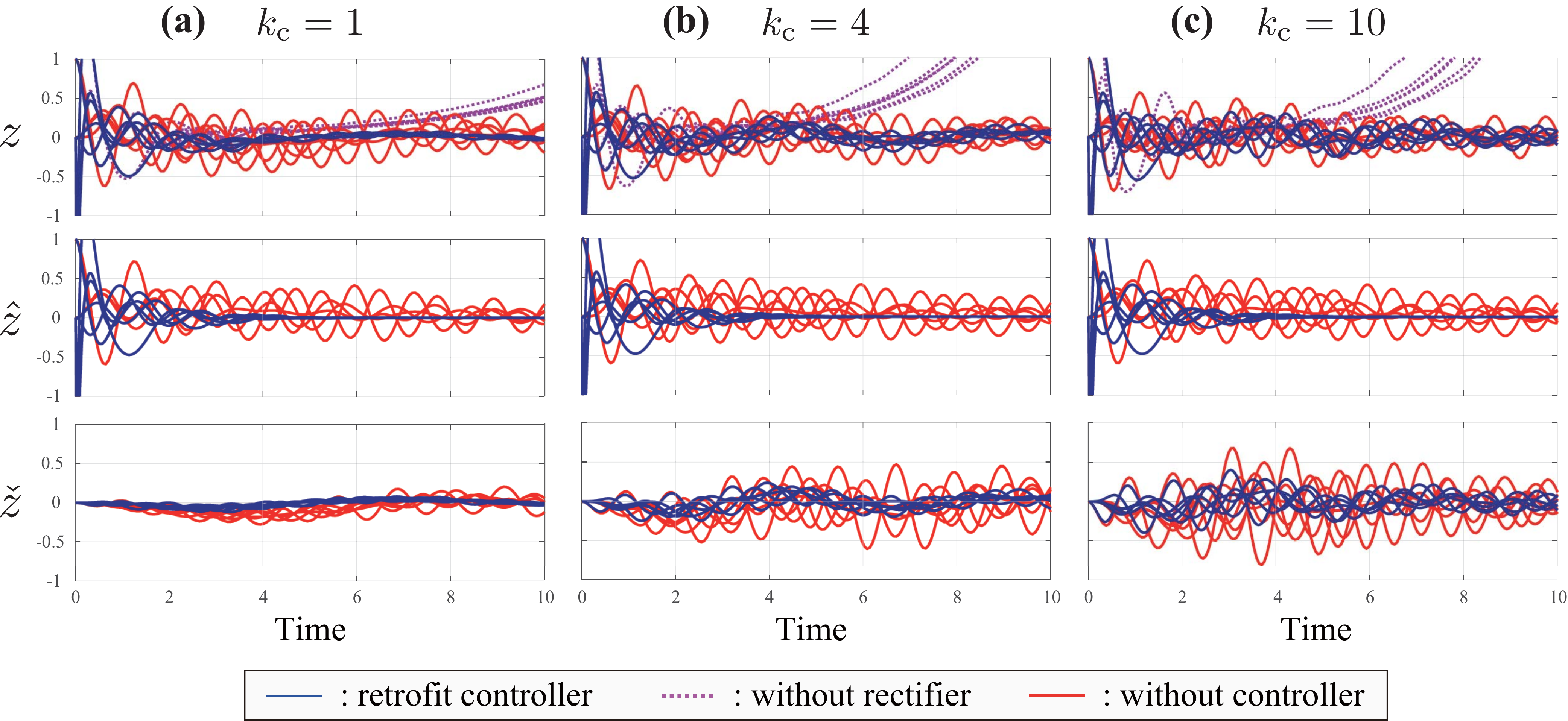}
\end{center}
\vspace{3pt}
\caption{
Resultant control system behavior in response to impulsive disturbance at the first node.
(a) Case of weak coupling between $G$ and $\overline{G}$. 
(b) Case of moderate coupling between $G$ and $\overline{G}$.
(c) Case of strong coupling between $G$ and $\overline{G}$.}
\label{figtrajml}
\vspace{6pt}
\end{figure*}

For the design of the module controller $\hat{K}$ in (\ref{stoutret}), we apply the standard $\mathcal{H}_{\infty}$-control synthesis technique to $G$, isolated from $\overline{G}$, such that
\begin{equation}\label{Jalp}
J_{\alpha}=\displaystyle \sup_{d\in \mathcal{L}_{2}\backslash\{0\}}\frac{\|(z,\alpha u)\|_{\mathcal{L}_{2}}}{\|d\|_{\mathcal{L}_{2}}}
\end{equation}
is minimized where $\alpha$ is a weighting constant for the control input.
Setting $\alpha=0.2$, we plot the impulse response of the resultant control system in Fig.~\ref{figtrajml}, where Figs.~\ref{figtrajml}(a)-(c) correspond to the cases of $k_{\rm c}=1$,  $k_{\rm c}=4$, and $k_{\rm c}=10$, i.e., weak coupling, moderate coupling, and strong coupling, respectively.
In each top subfigure of Figs.~\ref{figtrajml}(a)-(c), the blue solid lines show the trajectory of $z$ when the retrofit controller $K$ in (\ref{stoutret}) is used, the red solid lines show the case where no controller is used, and the magenta dotted lines show the case where the output rectifier $R$ is not involved in the controller, i.e., the module controller $\hat{K}$ is directly implemented as a simple decentralized controller
\begin{equation}\label{worec}
u=\hat{K}\left[\hspace{-2pt}
\begin{array}{c}
y\\
w
\end{array}
\hspace{-2pt}\right].
\end{equation}
From these top subfigures, we see that the direct implementation of $\hat{K}$ induces the  instability of the resultant control systems even though $\hat{K}$ is designed to be a stabilizing controller for $G$.
In contrast, the retrofit controller can actually guarantee the stability of the resultant control system for all the values of the coupling constant $k_{\rm c}$.

However, we can also see that the amplitude of $z$ becomes larger as the coupling between $G$ and $\overline{G}$ becomes stronger.
This outcome can be explained as follows.
The decomposed outputs $\hat{z}$ and $\check{z}$ in Fig.~\ref{figorigsigflow} are plotted in the middle and the bottom of Figs.~\ref{figtrajml}(a)-(c), where the blue and red lines correspond to the cases with and without the retrofit controller, respectively.
Note that the actual output $z$ is equal to the sum of $\hat{z}$ and $\check{z}$, the induced gains of which are denoted by $\hat{\gamma}$ and $\check{\gamma}$ in (\ref{Tul}), respectively.
In fact, the behavior of $\hat{z}$ is well controlled, and it is identical for all the values of $k_{\rm c}$ because the module controller $\hat{K}$ is designed only with the information of $G$, which is not dependent on $k_{\rm c}$.
In contrast, the magnitude of $\check{z}$ is amplified as $k_{\rm c}$ increases, i.e., as the gain of $\overline{G}$ increases.
In accordance with this amplification, the magnitude of the resultant $z$ is also amplified.

\begin{figure}[t]
\begin{center}
\includegraphics[width=80mm]{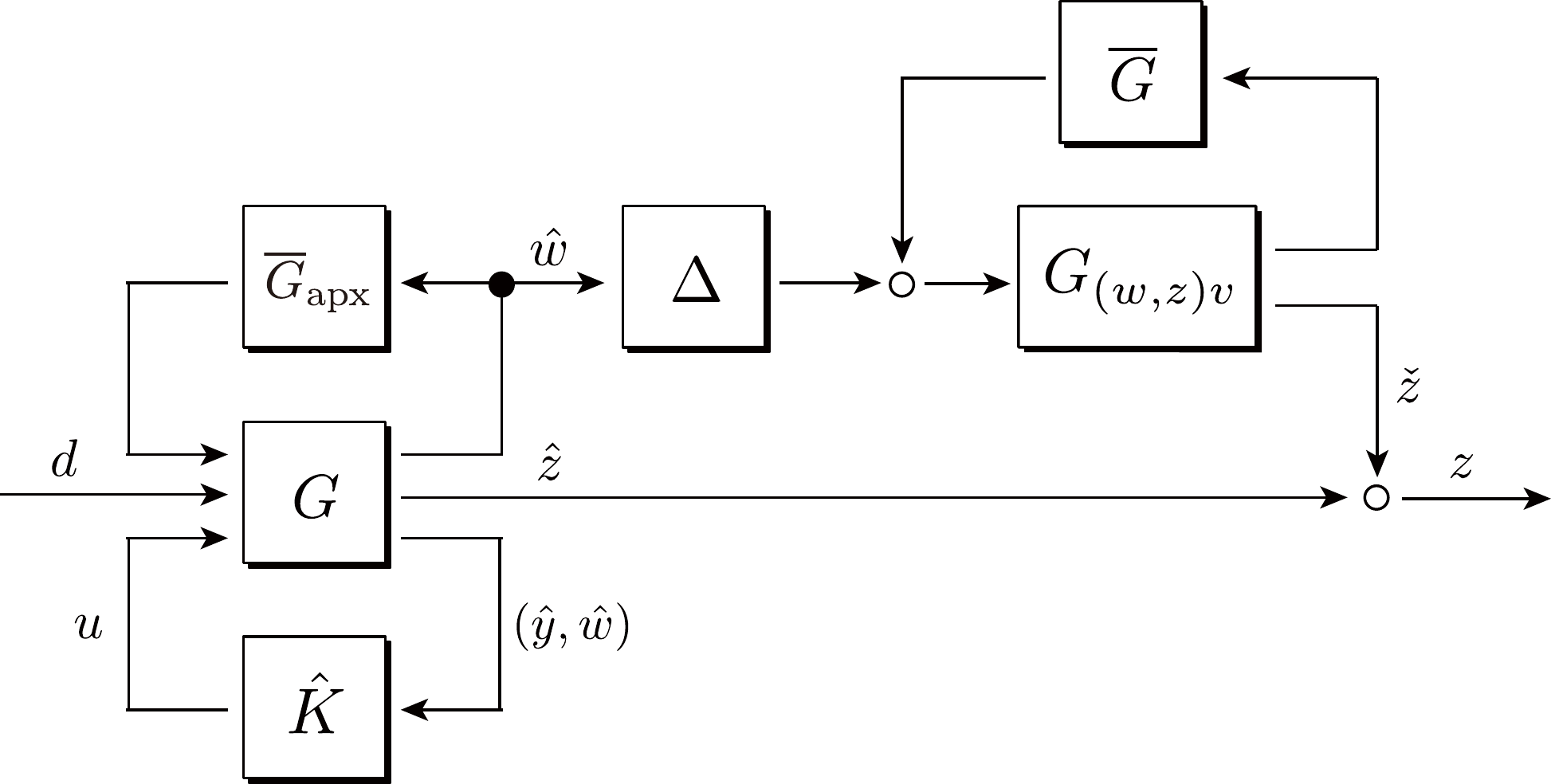}
\end{center}
\vspace{-0pt}
\caption{
Block diagram of extended retrofit control.}
\label{fignewsigflow}
\vspace{3pt}
\end{figure}

As demonstrated here, a small-gain property for $\overline{G}$ may be required for satisfactory regulation, though the internal stability of the resultant control system can be assured for any possible $\overline{G}$.
This is mainly because only an ``identical" retrofit controller designed with $G$ is used regardless of the variation of $\overline{G}$.
To overcome this drawback, in the following sections, we will develop a new retrofit control method that can produce the block diagram in Fig.~\ref{fignewsigflow}, where we make use of an approximate model $\overline{G}_{\rm apx}$ of the environment $\overline{G}$. 
An important difference between Fig.~\ref{figorigsigflow} and Fig.~\ref{fignewsigflow} is that the block of $\overline{G}$ in the middle of Fig.~\ref{figorigsigflow} is replaced with the block of
\begin{equation}\label{modeler}
\Delta:=\overline{G}-\overline{G}_{\rm apx}
\end{equation}
in Fig.~\ref{fignewsigflow}. 
Note that $\Delta$ represents a modeling error because $\overline{G}_{\rm apx}$ represents an approximate model of $\overline{G}$. 
Intuitively, as making the modeling error $\Delta$ small, we can generally reduce the amplitude of $\check{z}$.
We remark that the norm bound of $\Delta$ is assumed not to be assurable because $\overline{G}$ is assumed to be unknown.
In the next section, we will develop such an extended version of the retrofit control method that assures the entire system stability without assuming any assurance of environment modeling accuracy.

\section{Theoretical Developments}\label{sectd}
\subsection{Frequency-Domain Analysis: Characterization of Extended Retrofit Controllers}\label{secfda}

In this section, we premise that an approximate environment model has been found in some way, though its modeling accuracy is not assured for retrofit controller design.
Our basic strategy to incorporate such unassured environment modeling is to regard the feedback of the subsystem $G$ and the approximate environment model $\overline{G}_{\rm apx}$ as a new subsystem of interest.
This corresponds to the situation where the original preexisting system Fig.~\ref{figpresys}(a) is equivalently regarded as the feedback system in Fig.~\ref{fignewpre}.
In particular, we regard 
\begin{equation}\label{defGpls}
G^{+}:=G+G_{(w,z,y)v}\overline{G}_{\rm apx}(I-G_{wv}\overline{G}_{\rm apx})^{-1}G_{w(v,d,u)}
\end{equation}
as a new subsystem of interest and the modeling error $\Delta$ in (\ref{modeler}) as a new environment.
In this formulation, it is interesting to note that the modeling error $\Delta$ can be viewed as a dynamical component that stabilizes the new subsystem $G^{+}$.
Clearly, $G^{+}=G$ holds if $\overline{G}_{\rm apx}=0$.

In the following discussion, in a manner similar to (\ref{defGs}), we denote submatrices of $G^{+}$, e.g., by
\[
G_{(y,w)u}^{+}:=\left[\hspace{-2pt}
\begin{array}{cc}
G_{yu}^{+}\\
G_{wu}^{+}
\end{array}
\hspace{-2pt}\right],\quad
G_{(y,w)d}^{+}:=\left[\hspace{-2pt}
\begin{array}{cc}
G_{yd}^{+}\\
G_{wd}^{+}
\end{array}
\hspace{-2pt}\right].
\]
One may think that the existing results in Section~\ref{secrevrc} can be directly applied as simply replacing $G$ with $G^{+}$, and $\overline{G}$ with $\Delta$.
However, it is not very clear to see if such a simple replacement is valid or not because the interconnection signals between $G^{+}$ and $\Delta$ are found to be $v-\overline{G}_{\rm apx}w$ and $w$, which are clearly different from the original interconnection signals $v$ and $w$ between $G$ and $\overline{G}$.
Therefore, we need to carefully discuss how $K$ in the form of (\ref{retcont}) should be modified or generalized in this new formulation of retrofit control.
As an answer to this question, we will show that the set of all retrofit controllers with environment modeling actually coincides with the set of all retrofit controllers in Proposition~\ref{propost}, but has a much complicated structure.

\begin{figure}[t]
\begin{center}
\includegraphics[width=40mm]{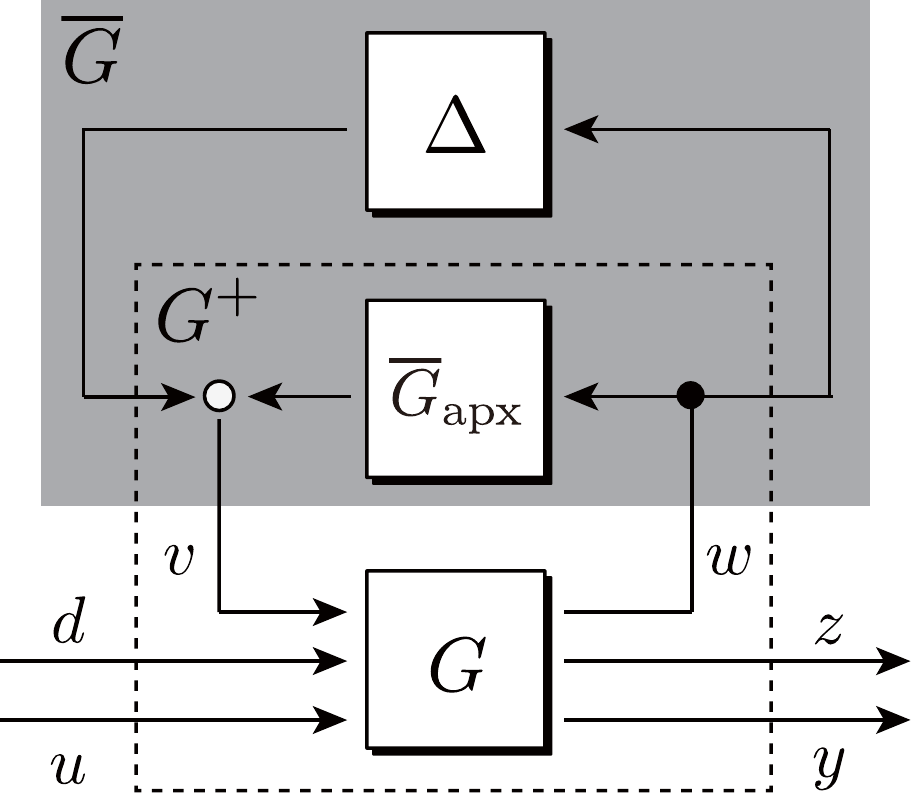}
\end{center}
\vspace{3pt}
\caption{
An equivalent representation of preexisting system.}
\label{fignewpre}
\vspace{6pt}
\end{figure}

In the derivation of Proposition~\ref{propost}, we started the discussion from the fact that the Youla parameter $Q$ can be factorized as in (\ref{factQ}), and then we showed that $\hat{K}$ in (\ref{stoutret}) is found to be a stabilizing controller for $G_{(y,w)u}$.
In what follows, as a converse direction, we first suppose that $\hat{K}$ is given as a stabilizing controller for $G_{(y,w)u}^{+}$, and then we will derive a compatible factorization of $Q$.
To make the Youla parameterization tractable, we make the following assumption.

\begin{assumption}\label{asmodsta}\normalfont
The approximate model $\overline{G}_{\rm apx}$ belongs to $\overline{\mathcal{G}}$, i.e., $G^{+}$ in (\ref{defGpls}) is internally stable.
\end{assumption}

Assumption~\ref{asmodsta} is in fact not crucial to prove the resultant control system stability, as shown in Theorem~\ref{thmtdr} below.
Owing to this assumption, the Youla parameterization of $\hat{K}$ can be simply written as
\begin{equation}\label{parakhat}
\hat{K}=(I+\hat{Q}G_{(y,w)u}^{+})^{-1}\hat{Q},\quad\hat{Q}\in \mathcal{RH}_{\infty}.
\end{equation}
This means that $\hat{K}$ is a stabilizing controller for $G_{(y,w)u}^{+}$.
As a generalization of (\ref{GRG}), we notice that
\begin{equation}\label{Gpfac}
G_{(y,w)u}^{+}=XRG_{(y,w,v)u}
\end{equation} 
where $X$, being invertible in $\mathcal{RH}_{\infty}$, is defined as
\begin{equation}\label{defX}
X:=\left[\hspace{-2pt}
\begin{array}{cc}
I&G_{yv}\overline{G}_{\rm apx}(I-G_{wv}\overline{G}_{\rm apx})^{-1}\\
0&(I-G_{wv}\overline{G}_{\rm apx})^{-1}
\end{array}
\hspace{-2pt}\right].
\end{equation}
Note that $X\in \mathcal{RH}_{\infty}$ for any $\overline{G}_{\rm apx}\in\overline{\mathcal{G}}$.
In addition, $X=I$ if $\overline{G}_{\rm apx}=0$.
Substituting (\ref{Gpfac}) into (\ref{parakhat}) and multiplying it by $XR$ from the right side, we have
\begin{equation}\label{Kparan}
\underbrace{\hat{K}XR}_{K}=(I+\underbrace{\hat{Q}XR}_{Q}G_{(y,w,v)u})^{-1}\underbrace{\hat{Q}XR}_{Q},\quad Q\in \mathcal{RH}_{\infty},
\end{equation}
which gives the Youla parameterization of $K$ such that (\ref{factQ}) holds.
We remark that $XR:(y,w,v)\mapsto(\hat{y},\hat{w})$ can be seen as an extended output rectifier that performs the  output rectification of
\begin{equation}\label{recout}
\begin{array}{ccl}
\hat{y}&=&(y-G_{yv}v)+G_{yv}\overline{G}_{\rm apx}(I-G_{wv}\overline{G}_{\rm apx})^{-1}(w-G_{wv}v),\\
\hat{w}&=&(I-G_{wv}\overline{G}_{\rm apx})^{-1}(w-G_{wv}v),
\end{array}
\end{equation}
which is a generalization of (\ref{orect}).
This derivation enables to generalize Proposition~\ref{propost} as follows.

\begin{theorem}\label{thmost}\normalfont
Let Assumptions~\ref{asstaG}, \ref{assumvy}, and \ref{asmodsta} hold.
Then, $K$ in (\ref{retcont}) is an output-rectifying retrofit controller if and only if
\begin{equation}\label{stoutretn}
K=\hat{K}XR
\end{equation}
where $\hat{K}$ is a stabilizing controller for $G_{(y,w)u}^{+}$, and $R$ and $X$ are defined as in (\ref{outrec}) and (\ref{defX}), respectively.
Furthermore, the block diagram of the resultant control system is depicted as in Fig.~\ref{fignewsigflow}, i.e., the entire map $T_{zd}:d\mapsto z$ is given as
\begin{equation}\label{casTzd}
T_{zd}=\hat{T}_{zd}^{+}(\hat{K})+G_{zv}(I-\overline{G}G_{wv})^{-1}\Delta\hat{T}_{wd}^{+}(\hat{K})
\end{equation}
where $\hat{T}_{zd}^{+}:d\mapsto\hat{z}$ and $\hat{T}_{wd}^{+}:d\mapsto\hat{w}$ denote the transfer matrices compatible with Fig.~\ref{fignewsigflow}, given as
\begin{equation}
\begin{array}{rcl}
\hat{T}_{zd}^{+}(\hat{K})&:=&G_{zd}^{+}+G_{zu}^{+}\hat{K}(I-G_{(y,w)u}^{+}\hat{K})^{-1}G_{(y,w)d}^{+}\\
\hat{T}_{wd}^{+}(\hat{K})&:=&G_{wd}^{+}+G_{wu}^{+}\hat{K}(I-G_{(y,w)u}^{+}\hat{K})^{-1}G_{(y,w)d}^{+},
\end{array}
\end{equation}
and $\Delta$ is defined as in (\ref{modeler}).
\end{theorem}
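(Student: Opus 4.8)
The plan is to prove the two assertions of the theorem separately: first the equivalence (\ref{stoutretn}), then the closed-loop formula (\ref{casTzd}). For the equivalence I would lean entirely on the Youla parameterization together with the factorization $G_{(y,w)u}^{+}=XRG_{(y,w,v)u}$ in (\ref{Gpfac}), exploiting that $X$ in (\ref{defX}) is invertible in $\mathcal{RH}_{\infty}$ under Assumption~\ref{asmodsta}. For the formula I would reduce the block diagram of Fig.~\ref{fignewpre} directly, i.e.\ view the preexisting feedback of $G$ and $\overline{G}$ as the feedback of the augmented subsystem $G^{+}$ with the residual environment $\Delta$, and then close the module-controller loop; I would not invoke Proposition~\ref{propost}, since that result is itself a specialization of the present theorem.

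For the ``if'' direction of (\ref{stoutretn}): suppose $\hat{K}$ stabilizes $G_{(y,w)u}^{+}$. Since $G^{+}$ is internally stable by Assumption~\ref{asmodsta}, so is the submatrix $G_{(y,w)u}^{+}$, and $\hat{K}$ admits the Youla form (\ref{parakhat}). Substituting (\ref{Gpfac}) and right-multiplying by $XR$ produces exactly (\ref{Kparan}), so $K=\hat{K}XR$ carries the Youla parameter $Q=\hat{Q}XR$. Because $R$ spans the left kernel of $G_{(y,w,v)v}$ we have $RG_{(y,w,v)v}=0$, whence $QG_{(y,w,v)v}=\hat{Q}X(RG_{(y,w,v)v})=0$; moreover $Q\in\mathcal{RH}_{\infty}$ as a product of elements of $\mathcal{RH}_{\infty}$. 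Proposition~\ref{propgen}, read in the measurement-augmented setting of Assumption~\ref{assumvy}, then certifies that $K$ is an output-rectifying retrofit controller, internally stabilizing the resultant system for every $\overline{G}\in\overline{\mathcal{G}}$. For the ``only if'' direction: if $K$ is output-rectifying retrofit, its Youla parameter obeys $QG_{(y,w,v)v}=0$, so by (\ref{factQ}) we may write $Q=\tilde{Q}R$ with $\tilde{Q}\in\mathcal{RH}_{\infty}$. Setting $\hat{Q}:=\tilde{Q}X^{-1}\in\mathcal{RH}_{\infty}$, which is legitimate because $X^{-1}\in\mathcal{RH}_{\infty}$, gives $\hat{Q}XR=\tilde{Q}R=Q$, and $\hat{K}:=(I+\hat{Q}G_{(y,w)u}^{+})^{-1}\hat{Q}$ stabilizes $G_{(y,w)u}^{+}$ and satisfies $K=\hat{K}XR$ by the same computation (\ref{Kparan}). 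These two inclusions give the claimed set equality.

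For the formula (\ref{casTzd}), I would first record the identity $\overline{G}=\overline{G}_{\rm apx}+\Delta$ that converts Fig.~\ref{figpresys}(a) into Fig.~\ref{fignewpre}: absorbing $\overline{G}_{\rm apx}$ into $G$ yields $G^{+}$ in (\ref{defGpls}), whose residual interconnection input is $\nu:=v-\overline{G}_{\rm apx}w$ and whose environment relation is $\nu=\Delta w$; write $G_{z\nu}^{+}$ and $G_{w\nu}^{+}$ for the blocks of $G^{+}$ from $\nu$ to $z$ and to $w$. The extended rectifier $XR$, by (\ref{recout}), is precisely the rectification of $(y,w)$ against $\nu$ re-expressed in the measured coordinates $(y,w,v)$, so feeding $(\hat{y},\hat{w})$ into $\hat{K}$ closes the nominal $G^{+}$ loop and yields $\hat{z}=\hat{T}_{zd}^{+}(\hat{K})d$, $\hat{w}=\hat{T}_{wd}^{+}(\hat{K})d$. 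The residual loop re-injects $\hat{w}$ through $\Delta$ and the $\nu$-to-$z$ path, contributing $\check{z}=G_{z\nu}^{+}(I-\Delta G_{w\nu}^{+})^{-1}\Delta\hat{T}_{wd}^{+}(\hat{K})d$, and $z=\hat{z}+\check{z}$.

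The step I expect to be the main obstacle is reconciling this residual factor, written through the shifted signal $\nu$ and the augmented blocks $G_{z\nu}^{+},G_{w\nu}^{+}$, with the clean factor $G_{zv}(I-\overline{G}G_{wv})^{-1}$ in (\ref{casTzd}), which is expressed through the \emph{original} $G_{zv}$ and the \emph{true} environment $\overline{G}$. Writing $A:=\overline{G}_{\rm apx}$ and $H:=G_{wv}$, a push-through simplification of (\ref{defGpls}) gives $G_{w\nu}^{+}=(I-HA)^{-1}H$ and $G_{z\nu}^{+}=G_{zv}(I-AH)^{-1}$; the crux is then the identity $(I-\Delta(I-HA)^{-1}H)(I-AH)=I-(A+\Delta)H$, which follows from $H(I-AH)=(I-HA)H$ and collapses $G_{z\nu}^{+}(I-\Delta G_{w\nu}^{+})^{-1}$ to $G_{zv}(I-(A+\Delta)H)^{-1}=G_{zv}(I-\overline{G}G_{wv})^{-1}$, yielding (\ref{casTzd}). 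This Woodbury-type bookkeeping, rather than the Youla algebra, is where care is needed. I note finally that none of the formula manipulations actually use Assumption~\ref{asmodsta}, which—consistent with the remark anticipating Theorem~\ref{thmtdr}—serves only to streamline the Youla parameterization in the equivalence part.
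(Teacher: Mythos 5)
Your proof is correct, and its first half is essentially the paper's own argument: both hinge on the Youla parameterization combined with the factorization (\ref{Gpfac}) and the invertibility of $X$ in $\mathcal{RH}_{\infty}$; you simply spell out the two directions that the paper compresses into the remark that ``$X$ is invertible and $R$ is right invertible'' (your ``only if'' step, like the paper's, silently leans on the asserted property $X^{-1}\in\mathcal{RH}_{\infty}$ stated alongside (\ref{defX})). Where you genuinely diverge is in proving (\ref{casTzd}). The paper does not reduce the block diagram of Fig.~\ref{fignewpre}: it imports from \cite{sasahara2018parameterization} the closed-loop expression for $T_{zd}$ valid for any output-rectifying retrofit controller, rewrites it in terms of $\hat{Q}$ via $G^{+}_{(y,w)d}=XRG_{(y,w,v)d}$, writes down separately the map $T'_{zd}$ implied by Fig.~\ref{fignewsigflow}, and proves $T_{zd}=T'_{zd}$ by verifying two operator identities such as (\ref{GGpls}) using the push-through relations (\ref{relPK1}). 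You instead derive the cascade structure from scratch: loop-shifting $\overline{G}=\overline{G}_{\rm apx}+\Delta$ to pass to the $G^{+}$/$\Delta$ picture, observing that the rectified signals $(\hat{y},\hat{w})$ reproduce the nominal $G^{+}$ response (so closing the $\hat{K}$ loop yields $\hat{T}^{+}_{zd}$ and $\hat{T}^{+}_{wd}$), and collapsing the residual factor $G^{+}_{z\nu}(I-\Delta G^{+}_{w\nu})^{-1}$ to $G_{zv}(I-\overline{G}G_{wv})^{-1}$. Your supporting computations check out: with $A:=\overline{G}_{\rm apx}$, $H:=G_{wv}$ one indeed has $G^{+}_{w\nu}=(I-HA)^{-1}H$, $G^{+}_{z\nu}=G_{zv}(I-AH)^{-1}$, and $(I-\Delta(I-HA)^{-1}H)(I-AH)=I-(A+\Delta)H$, which is the same push-through algebra the paper deploys in proving (\ref{GGpls}), just organized around the shifted interconnection signal $\nu$ rather than around the Youla parameter. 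The trade-off: your route is self-contained (no appeal to the prior-work formula) and makes the Fig.~\ref{fignewsigflow} cascade emerge by construction rather than by after-the-fact verification of an identity, while the paper's route is shorter given the cited result. Your closing observation that Assumption~\ref{asmodsta} plays no role in the formula manipulations is also consistent with the paper's own remark anticipating Theorem~\ref{thmtdr}.
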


\begin{proof}\normalfont
We see that (\ref{Kparan}) is the Youla parameterization of $K$ in (\ref{stoutretn}).
Note that (\ref{parakhat}) is equivalent to (\ref{Kparan}) because $X$ is invertible and $R$ is right invertible.
Thus, all output-rectifying retrofit controllers in the form of (\ref{retcont}) can be written as (\ref{stoutretn}).

Next, let us prove (\ref{casTzd}).
As shown in \cite{sasahara2018parameterization}, for any output-rectifying retrofit controller, the entire map is given as
\[
\begin{array}{l}
T_{zd}=G_{zd}+G_{zu}QG_{(y,w,v)d}\\
\hspace{20pt}+G_{zv}(I-\overline{G}G_{wv})^{-1}\overline{G}(G_{wd}+G_{wu}QG_{(y,w,v)d}).
\end{array}
\]
In a similar manner to (\ref{Gpfac}), we have
\[
G_{(y,w)d}^{+}=XRG_{(y,w,v)d}.
\]
Thus, we see that
\[
\begin{array}{l}
T_{zd}=G_{zd}+G_{zu}\hat{Q}G_{(y,w)d}^{+}\\
\hspace{20pt}+G_{zv}(I-\overline{G}G_{wv})^{-1}\overline{G}(G_{wd}+G_{wu}\hat{Q}G_{(y,w)d}^{+}).
\end{array}
\]
On the other hand, the input-to-output map of Fig.~\ref{fignewsigflow}, denoted here by 
$T_{zd}^{\prime}:d\mapsto z$, is given as
\[
\begin{array}{l}
T_{zd}^{\prime}=G_{zd}^{+}+G_{zu}^{+}\hat{Q}G_{(y,w)d}^{+}\\
\hspace{20pt}+G_{zv}(I-\overline{G}G_{wv})^{-1}\Delta(G_{wd}^{+}+G_{wu}^{+}\hat{Q}G_{(y,w)d}^{+}).
\end{array}
\]
For the identity of $T_{zd}=T_{zd}^{\prime}$, it suffices to show that
\begin{equation}\label{GGpls}
\begin{array}{l}
G_{zd}+G_{zv}(I-\overline{G}G_{wv})^{-1}\overline{G}G_{wd}\\
\hspace{40pt}
=G_{zd}^{+}+G_{zv}(I-\overline{G}G_{wv})^{-1}\Delta G_{wd}^{+}
\end{array}
\end{equation}
and
\[
\begin{array}{l}
G_{zu}+G_{zv}(I-\overline{G}G_{wv})^{-1}\overline{G}G_{wu}\\
\hspace{40pt}
=G_{zu}^{+}+G_{zv}(I-\overline{G}G_{wv})^{-1}\Delta G_{wu}^{+}.
\end{array}
\]
Because both equalities can be proven in a similar manner, we only prove (\ref{GGpls}).
Subtracting the left-hand side of (\ref{GGpls}) from the right-hand side, we have
\[
\begin{array}{l}
G_{zv}\biggl[\bigl\{\underbrace{I-(I-\overline{G}G_{wv})^{-1}}_{-(I-\overline{G}G_{wv})^{-1}\overline{G}G_{wv}}\bigr\}\overline{G}_{\rm apx}(I-G_{wv}\overline{G}_{\rm apx})^{-1}\\
\hspace{20pt}
+(I-\overline{G}G_{wv})^{-1}\overline{G}
\bigl\{\underbrace{(I-G_{wv}\overline{G}_{\rm apx})^{-1}-I}_{G_{wv}\overline{G}_{\rm apx}(I-G_{wv}\overline{G}_{\rm apx})^{-1}}\bigr\}\biggr]G_{wd}=0.
\end{array}
\]
The relations indicated by the underbraces are proven by
\begin{equation}\label{relPK1}
\begin{array}{ccl}
(I+PK)^{-1}&=&I+PK(I-PK)^{-1}\\
&=&I+(I-PK)^{-1}PK.
\end{array}
\end{equation}
Hence, the claim is proven. \hfill $\square $
\end{proof}

Theorem~\ref{thmost} provides another representation of all output-rectifying retrofit controllers in which the approximate environment model is involved as a tuning parameter.
In particular, $K$ in (\ref{stoutretn}) is shown to be an output-rectifying retrofit controller if the module controller
\begin{subequations}\label{modcons}
\begin{equation}
u=\underbrace{\left[\hspace{-2pt}
\begin{array}{ccc}
\hat{K}_{y}&\hat{K}_{w}
\end{array}
\hspace{-2pt}\right]}_{\hat{K}}
\left[\hspace{-2pt}
\begin{array}{c}
\hat{y}\\
\hat{w}
\end{array}
\hspace{-2pt}\right]
\end{equation}
is a stabilizing controller for the new subsystem of interest
\begin{equation}
\left[\hspace{-2pt}
\begin{array}{c}
\hat{y}\\
\hat{w}
\end{array}
\hspace{-2pt}\right]=
\left[\hspace{-2pt}
\begin{array}{c}
G_{yu}^{+}\\
G_{wu}^{+}
\end{array}
\hspace{-2pt}\right]u.
\end{equation}
\end{subequations}
The resultant retrofit controller is specifically found as
\[
\begin{array}{ccl}
u&=&\hat{K}_{y}
\Bigl\{(y-G_{yv}v)\\
&&\hspace{40pt}+G_{yv}\overline{G}_{\rm apx}(I-G_{wv}\overline{G}_{\rm apx})^{-1}(w-G_{wv}v)\Bigr\}\\
&+&\hat{K}_{w}(I-G_{wv}\overline{G}_{\rm apx})^{-1}(w-G_{wv}v).
\end{array}
\]
It is not trivial to see that the control system in Fig.~\ref{figpresys}(b) with such a complicated controller can be equivalently expressed as the cascade block diagram in Fig.~\ref{fignewsigflow}.

The feedback structure in the retrofit controller is encapsulated as the invertible transfer matrix ``$X$" involved in (\ref{stoutretn}), which gives a clear bridge between the new retrofit controller in Theorem~\ref{thmost} and the existing one in Proposition~\ref{propost}.
In fact, those retrofit controllers have a one-to-one correspondence, i.e., $\hat{K}$ is a stabilizing controller for $G_{(y,w)u}^{+}$ in the new retrofit control formulation if and only if $\hat{K}X$ is a stabilizing controller for $G_{(y,w)u}$ in the existing formulation.
We remark that such an idea of factorizing a stabilizing controller for $G_{(y,w)u}$ as in the particular form of ``$\hat{K}X$" is generally difficult to devise in the framework of the existing retrofit control.

Owing to this special controller factorization, the extended retrofit controller gains higher flexibility in design.
In the existing formulation, the Youla parameter of all output-rectifying retrofit controllers is expressed as $Q=\hat{Q}R$ where we can choose $\hat{Q}$ as ``any'' stable transfer matrices.
This means that even the dimension of $\hat{Q}$ can be arbitrary in general.
However, a standard controller design technique, such as the $\mathcal{H}_{2}/\mathcal{H}_{\infty}$-control synthesis, generally produces a stabilizing controller $\hat{K}$, or equivalently $\hat{Q}$, only with a dimension comparable to that of $G_{(y,w)u}$.
This can be seen as an implicit limitation to find a possibly better controller.
In contrast, the new retrofit control formulation provides an additional degree of freedom to find out a higher-dimensional $\hat{Q}$ by tuning the approximate environment model $\overline{G}_{\rm apx}$, whose dimension can be selected arbitrarily.

Another practical insight gained from Theorem~\ref{thmost} is the fact that the gap between the actual performance level of the resultant control system and the assumed performance level of the modular control system can be reduced if accurate environment modeling is performed.
In particular, we can easily have a generalization of Proposition~\ref{propbnd} as follows.

\begin{theorem}\label{thmbnd}\normalfont
With the same notation as that in Theorem~\ref{thmost}, the resultant control performance is bounded as
\begin{equation}\label{Tul2}
|\check{\gamma}^{+}-\hat{\gamma}^{+}|\leq\|T_{zd}\|_{\infty}\leq\hat{\gamma}^{+}+\check{\gamma}^{+}
\end{equation}
where the induced gains of $\hat{z}$ and $\check{z}$ are given as
\[
\hat{\gamma}^{+}\!:=\!\|\hat{T}_{zd}^{+}(\hat{K})\|_{\infty},\quad \!\!
\check{\gamma}^{+}\!:=\!\|G_{zv}(I-\overline{G}G_{wv})^{-1}\Delta\hat{T}_{wd}^{+}(\hat{K})\|_{\infty}.
\]
\end{theorem}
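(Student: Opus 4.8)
The plan is to obtain~(\ref{Tul2}) directly from the closed-form expression of the entire map derived in Theorem~\ref{thmost}, without any fresh computation. Equation~(\ref{casTzd}) already writes $T_{zd}$ as a sum of exactly two transfer matrices,
\[
T_{zd}=\hat{T}_{zd}^{+}(\hat{K})+G_{zv}(I-\overline{G}G_{wv})^{-1}\Delta\hat{T}_{wd}^{+}(\hat{K}),
\]
and, by the very definitions of $\hat\gamma^{+}$ and $\check\gamma^{+}$, the induced norms of the first and second summands are $\hat\gamma^{+}$ and $\check\gamma^{+}$, respectively. In the signal notation used for Proposition~\ref{propbnd}, this is just the decomposition $z=\hat z+\check z$ with $\hat z=\hat{T}_{zd}^{+}(\hat{K})d$ and $\check z=G_{zv}(I-\overline{G}G_{wv})^{-1}\Delta\hat{T}_{wd}^{+}(\hat{K})d$, now with $\overline{G}$ replaced by the modeling error $\Delta$ and $G$ by the augmented subsystem $G^{+}$.

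First I would prove the upper bound by invoking subadditivity (the triangle inequality) of the induced norm on the two-term sum above, which gives $\|T_{zd}\|_{\infty}\le\hat\gamma^{+}+\check\gamma^{+}$. The lower bound then follows from the reverse triangle inequality, itself a direct consequence of subadditivity, applied to the same two summands: $\bigl|\,\hat\gamma^{+}-\check\gamma^{+}\,\bigr|\le\|T_{zd}\|_{\infty}$, which is precisely $|\check\gamma^{+}-\hat\gamma^{+}|\le\|T_{zd}\|_{\infty}$. Thus both halves of~(\ref{Tul2}) reduce to the single fact that $\|\cdot\|_{\infty}$ is a genuine norm, and the argument is a verbatim transcription of the proof of Proposition~\ref{propbnd}, with~(\ref{Tul}) recovered as the special case $\overline{G}_{\rm apx}=0$.

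The one point that genuinely requires care --- and what I expect to be the main (though mild) obstacle --- is the well-definedness and finiteness of the three norms in~(\ref{Tul2}), since the triangle inequalities are meaningful only for transfer matrices of finite $\mathcal{L}_{\infty}$-norm. Because $K=\hat{K}XR$ is an output-rectifying retrofit controller, Theorem~\ref{thmost} guarantees internal stability of the resultant control system for every $\overline{G}\in\overline{\mathcal{G}}$, so $T_{zd}\in\mathcal{RH}_{\infty}$ and $\|T_{zd}\|_{\infty}$ is a bona fide $\mathcal{H}_{\infty}$-norm. Moreover, under Assumption~\ref{asmodsta} the augmented subsystem $G^{+}$ is internally stable and $\hat{K}$ stabilizes $G_{(y,w)u}^{+}$, so $\hat{T}_{zd}^{+}(\hat{K})\in\mathcal{RH}_{\infty}$ and $\hat\gamma^{+}$ is finite; finiteness of $\check\gamma^{+}$ then comes for free, since the check term equals the difference $T_{zd}-\hat{T}_{zd}^{+}(\hat{K})$ of two stable transfer matrices and is hence itself stable. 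Once these three finiteness facts are in place, the two triangle inequalities apply termwise and~(\ref{Tul2}) follows with no further computation.
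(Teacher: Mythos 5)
Your proof is correct and follows essentially the same route as the paper, which obtains (\ref{Tul2}) exactly as in Proposition~\ref{propbnd}: take the two-term decomposition of $T_{zd}$ from (\ref{casTzd}) and apply the triangle and reverse triangle inequalities for the induced norm. Your additional check that all three norms are finite (via internal stability from Theorem~\ref{thmost} and Assumption~\ref{asmodsta}) is a careful touch the paper leaves implicit, but it does not change the argument.
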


As a generalization of Proposition~\ref{propbnd}, $\hat{\gamma}^{+}$ again corresponds to the assumed performance level, and $\check{\gamma}^{+}$ evaluates the gap between $\|T_{zd}\|_{\infty}$ and $\hat{\gamma}^{+}$.
Because the modeling error $\Delta$ is linearly involved in $\check{\gamma}^{+}$, we can  expect that $\check{\gamma}^{+}$ decreases if the magnitude of $\Delta$ is made small.
Clearly, $\check{\gamma}^{+}=0$, or equivalently, $\|T_{zd}\|_{\infty}=\hat{\gamma}^{+}$ if $\Delta=0$.
Therefore, as improving the accuracy of environment modeling, we can generally reduce the ``bottleneck'' of the existing retrofit control described in Section~\ref{secrevrc}.

\vspace{3pt}
\begin{breakbox}
\noindent\underline{\textbf{NOTE}}~
We again remark that the proposed retrofit control has a clear distinction from robust control.
One may think that the modeling error $\Delta$ can be handled as a model uncertainty in robust control.
However, because the environment $\overline{G}$ is assumed here to be unknown, the norm bound of $\Delta$ is not assurable in the above formulation.
Generally, such an unassured modeling error is not considered in a standard robust control setting.
In contrast, the proposed retrofit control can always ensure the internal stability of the resultant control system, without the assurance of modeling accuracy.
The stability assurance is only reliant on the preexisting system stability as premised in Definition~\ref{defretf}.
Note that the resultant controller is also a retrofit controller, i.e., it can keep the interconnection transfer matrix invariant as shown by (\ref{gengqg}) and (\ref{intinv}).
This property would be counterintuitive because the proposed retrofit controller is designed based on the feedback model of $G$ and $\overline{G}_{\rm apx}$.
\end{breakbox}\vspace{3pt}


\subsection{Time-Domain Analysis: State-Space Realization of Extended Retrofit Controllers}\label{sectimd}

For simplicity of the Youla parameterization, we have assumed in Theorem~\ref{thmost} that the subsystem $G$ is stable, and the approximate environment model $\overline{G}_{\rm apx}$ belongs to $\overline{\mathcal{G}}$.
However, these assumptions are, in fact, not crucial to prove the internal stability of the resultant control system as shown in this subsection.
To prove this, we derive a tractable state-space realization of  $K$ in (\ref{stoutretn}).
Furthermore, we show that the block diagram in Fig.~\ref{fignewsigflow} can be understood as a particular state-space realization of the entire control system obtained by a coordinate transformation.

We describe a time-domain representation of the preexisting system (\ref{prefed}) by
\begin{subequations}\label{timepreu}
\begin{equation}\label{timesys}
\overline{G}:v=\overline{\boldsymbol G}w,\quad
G:\left\{
\begin{array}{ccl}
\dot{x}&=&Ax+Bu+Lv+Wd \vspace{-3pt}\\
w&=&\mathit{\Gamma}x \vspace{-3pt}\\
z&=&Sx \vspace{-3pt}\\
y&=&Cx.
\end{array}
\right.
\end{equation}
For simplicity of description, we suppose here that $\overline{\boldsymbol G}$ is a static map, i.e., a matrix.
We remark that the subsequent discussion can be easily extended to the case of dynamical environments in such a way that $\overline{\boldsymbol G}$ is regarded as the convolution operator associated with $\overline{G}$, i.e., 
\[
 v(t)=\int_{0}^{t}\overline{g}(t-\tau)w(\tau)d\tau
\]
where $\overline{g}(t)$ is the impulse response of $\overline{G}$.
The bold face symbols that will appear in the subsequent discussion, such as $\overline{\boldsymbol G}_{\rm apx}$, are also supposed to be static, just for simplicity of description.

The premise of $\overline{G}\in\overline{\mathcal{G}}$, i.e., the internal stability of (\ref{prefed}), in Section~\ref{secfda} can be rephrased as the stability of 
\begin{equation}\label{timepre}
G_{\rm pre}:
\left\{
\begin{array}{ccl}
\dot{x}&=&(A+L\overline{\boldsymbol G}\mathit{\Gamma})x+Bu+Wd \vspace{-3pt}\\
z&=&Sx \vspace{-3pt}\\
y&=&Cx,
\end{array}
\right.
\end{equation}
which is a combined representation of the subsystem and environment in (\ref{timesys}).
\end{subequations}
As a time-domain analog of Definition~\ref{defretf}, we introduce the following terminology.

\begin{definition}\normalfont
For the preexisting system $G_{\rm pre}$ in (\ref{timepre}), define the set of all admissible environments as
\begin{equation}
\overline{\mathscr G}:=\left\{\overline{\boldsymbol G}:A+L\overline{\boldsymbol G}\mathit{\Gamma}{\rm\ is\ stable}\right\}.
\end{equation}
Under Assumption~\ref{assumvy}, an output feedback controller
\begin{equation}\label{contime}
u=\mathcal{K}(y,w,v),
\end{equation}
where $\mathcal{K}$ denotes a dynamical map, is said to be a \textit{retrofit controller} if the resultant control system that is composed of (\ref{timepreu}) and (\ref{contime}) is internally stable for any possible environment $\overline{\boldsymbol G}\in\overline{\mathscr G}$.
\end{definition}

On the basis of this definition, a state-space realization of the extended retrofit controller is given as follows.
We again remark that Assumptions~\ref{asstaG} and \ref{asmodsta}, i.e., the assumptions on the stability of $G$ and $G^{+}$, are not required to prove the internal stability of the resultant control system.

\begin{theorem}\label{thmtdr}\normalfont
Let Assumption~\ref{assumvy} hold.
For any approximate environment model $\overline{\boldsymbol G}_{\rm apx}$ and any feedback gains $\hat{{\boldsymbol K}}_{y}$ and $\hat{{\boldsymbol K}}_{w}$ such that
\begin{equation}\label{localstab}
A+L\overline{\boldsymbol G}_{\rm apx}\mathit{\Gamma}+B(\hat{{\boldsymbol K}}_{y}C+\hat{{\boldsymbol K}}_{w}\mathit{\Gamma})
\end{equation}
is stable, an output feedback controller
\begin{equation}\label{ssK}
K:\left\{
\begin{array}{ccl}
\dot{\hat{x}}&=&A\hat{x}+L\bigl\{v-\overline{\boldsymbol G}_{\rm apx}(w-\mathit{\Gamma}\hat{x})\bigr\} \vspace{-3pt}\\
u&=&\hat{{\boldsymbol K}}_{y}(y-C\hat{x})+\hat{{\boldsymbol K}}_{w}(w-\mathit{\Gamma}\hat{x})
\end{array}
\right.
\end{equation}
is a retrofit controller.
\end{theorem}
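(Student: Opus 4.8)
The plan is to pass to error coordinates and exhibit the closed-loop system matrix as block upper triangular, so that internal stability splits into the preexisting-system block and the design block (\ref{localstab}). First I would substitute the environment and measurement relations $v=\overline{\boldsymbol G}\mathit{\Gamma}x$, $w=\mathit{\Gamma}x$, and $y=Cx$ into both the plant (\ref{timesys}) and the controller (\ref{ssK}), and introduce the estimation error $e:=x-\hat{x}$. A short calculation collapses the control law to $u=(\hat{{\boldsymbol K}}_{y}C+\hat{{\boldsymbol K}}_{w}\mathit{\Gamma})e$, since both $y-C\hat{x}$ and $w-\mathit{\Gamma}\hat{x}$ reduce to multiples of $e$. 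The plant equation then reads $\dot{x}=(A+L\overline{\boldsymbol G}\mathit{\Gamma})x+B(\hat{{\boldsymbol K}}_{y}C+\hat{{\boldsymbol K}}_{w}\mathit{\Gamma})e+Wd$.

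The decisive step is forming $\dot{e}=\dot{x}-\dot{\hat{x}}$. Here the term $L\overline{\boldsymbol G}\mathit{\Gamma}x$, which carries the \emph{unknown} environment $\overline{\boldsymbol G}$, enters $\dot{x}$ and $\dot{\hat{x}}$ with opposite signs and cancels, while the approximate-model term $L\overline{\boldsymbol G}_{\rm apx}\mathit{\Gamma}$ survives. I expect the error dynamics to come out as $\dot{e}=\{A+L\overline{\boldsymbol G}_{\rm apx}\mathit{\Gamma}+B(\hat{{\boldsymbol K}}_{y}C+\hat{{\boldsymbol K}}_{w}\mathit{\Gamma})\}e+Wd$, that is, governed precisely by the matrix in (\ref{localstab}) and driven by $d$ but not by $x$. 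Consequently, in the coordinates $(x,e)$ the closed-loop system matrix is block upper triangular, with $(1,1)$ block $A+L\overline{\boldsymbol G}\mathit{\Gamma}$ and $(2,2)$ block equal to (\ref{localstab}).

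Internal stability then follows from two independent facts. The $(2,2)$ block is stable by hypothesis, and the $(1,1)$ block $A+L\overline{\boldsymbol G}\mathit{\Gamma}$ is stable because $\overline{\boldsymbol G}\in\overline{\mathscr G}$, by the very definition of an admissible environment. Since $(x,\hat{x})\mapsto(x,e)$ is an invertible linear change of coordinates, it preserves the spectrum, so the original closed-loop matrix is Hurwitz for \emph{every} $\overline{\boldsymbol G}\in\overline{\mathscr G}$; well-posedness is immediate because $u$ depends only on the states $x$ and $\hat{x}$ (through $y$, $w$, $v$) and not on $u$ itself, so no algebraic loop arises. This is exactly the retrofit property demanded by the definition.

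The only real obstacle is the cancellation bookkeeping: the subtle point is that it is the unknown-environment term $L\overline{\boldsymbol G}\mathit{\Gamma}x$ that must vanish from $\dot{e}$, whereas the approximate-model term recombines with the feedback to reproduce (\ref{localstab}) exactly. Notably, this argument never invokes Assumptions~\ref{asstaG} or \ref{asmodsta} (stability of $G$ or $G^{+}$), which confirms the claim that those assumptions are inessential and were imposed only to streamline the Youla parameterization in the frequency-domain analysis.
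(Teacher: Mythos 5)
Your proof is correct and takes essentially the same route as the paper's: the paper applies the coordinate change $\hat{\xi}=x-\hat{x}$, $\check{\xi}=\hat{x}$, which is the same error-coordinate idea you use, producing a block-triangular (cascade) closed loop with identical diagonal blocks --- the design matrix (\ref{localstab}) governing the error and the preexisting-system matrix $A+L\overline{\boldsymbol G}\mathit{\Gamma}$ --- after the same cancellation of the unknown $\overline{\boldsymbol G}$ term. The only portion of the paper's proof you omit is the preliminary verification that (\ref{ssK}) is a state-space realization of the frequency-domain controller $\hat{K}XR$ in (\ref{stoutretn}), which serves to connect Theorem~\ref{thmtdr} to Theorem~\ref{thmost} but is not needed for the internal-stability claim itself.
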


\begin{proof}\normalfont
We first prove that (\ref{ssK}) is a state-space realization of $K$ in (\ref{stoutretn}).
The time-domain representation of $\hat{K}$ in (\ref{stoutretn}) is now given as
\begin{equation}\label{timehatk}
\hat{K}:u=\left[\hspace{-2pt}
\begin{array}{cc}
\hat{{\boldsymbol K}}_{y}&\hat{{\boldsymbol K}}_{w}
\end{array}
\hspace{-2pt}\right]
\left[\hspace{-2pt}
\begin{array}{cc}
\hat{y}\\
\hat{w}
\end{array}
\hspace{-2pt}\right].
\end{equation}
The stability of (\ref{localstab}) corresponds to the fact that the feedback gains $\hat{{\boldsymbol K}}_{y}$ and $\hat{{\boldsymbol K}}_{w}$ are chosen such that (\ref{timehatk}) stabilizes
\begin{equation}\label{upsys}
\overline{G}_{\rm apx}:\hat{v}=\overline{\boldsymbol G}_{\rm apx}\hat{w},\quad
G_{(y,w)u}:\left\{
\begin{array}{ccl}
\dot{\hat{\xi}}&=&A\hat{\xi}+Bu+L\hat{v} \vspace{-3pt}\\
\hat{w}&=&\mathit{\Gamma}\hat{\xi} \vspace{-3pt}\\
\hat{y}&=&C\hat{\xi},
\end{array}
\right.
\end{equation}
which is a time-domain representation of $G_{(y,w)u}^{+}$.
What remains to show is that
\begin{equation}\label{exrecti}
XR:\left\{
\begin{array}{ccl}
\dot{\hat{x}}&=&A\hat{x}+L\bigl\{v-\overline{\boldsymbol G}_{\rm apx}(w-\mathit{\Gamma}\hat{x})\bigr\} \vspace{-3pt}\\
\hat{y}&=&y-C\hat{x}\vspace{-3pt}\\
\hat{w}&=&w-\mathit{\Gamma}\hat{x}
\end{array}
\right.
\end{equation}
is a realization of $XR:(y,w,v)\mapsto(\hat{y},\hat{w})$, i.e., the output rectifier given in the frequency domain as
\[
XR=\left[
\begin{array}{ccc}
\scriptstyle I&\scriptstyle G_{yv}(I-\overline{G}_{\rm apx}G_{wv})^{-1}\overline{G}_{\rm apx}&\scriptstyle-G_{yv}(I-\overline{G}_{\rm apx}G_{wv})^{-1}\\
\scriptstyle 0&\scriptstyle(I-G_{wv}\overline{G}_{\rm apx})^{-1}&\scriptstyle-(I-G_{wv}\overline{G}_{\rm apx})^{-1}G_{wv}
\end{array}
\right].
\]
The block diagram of (\ref{exrecti}) can be depicted as in Fig.~\ref{figexrec}.
From this diagram, we see that
\[
\begin{array}{ccl}
y^{\prime}&=&-G_{yv}(I-\overline{G}_{\rm apx}G_{wv})^{-1}\overline{G}_{\rm apx}w\vspace{-2pt}\\
&&\hspace{100pt}+G_{yv}(I-\overline{G}_{\rm apx}G_{wv})^{-1}v\vspace{1pt}\\
w^{\prime}&=&-(I-G_{wv}\overline{G}_{\rm apx})^{-1}G_{wv}\overline{G}_{\rm apx}w\vspace{-2pt}\\
&&\hspace{100pt}+(I-G_{wv}\overline{G}_{\rm apx})^{-1}G_{wv}v
\end{array}
\]
where we have used the fact that
\begin{equation}\label{relPK2}
(I-PK)^{-1}P=P(I-KP)^{-1}.
\end{equation}
Therefore, we have
\[
\begin{array}{ccl}
\hat{y}&=&y+G_{yv}(I-\overline{G}_{\rm apx}G_{wv})^{-1}\overline{G}_{\rm apx}w\vspace{-2pt}\\
&&\hspace{110pt}-G_{yv}(I-\overline{G}_{\rm apx}G_{wv})^{-1}v\vspace{1pt}\\
\hat{w}&=&(I-G_{wv}\overline{G}_{\rm apx})^{-1}w-(I-G_{wv}\overline{G}_{\rm apx})^{-1}G_{wv}v
\end{array}
\]
where we have used (\ref{relPK1}) for the calculation of $\hat{w}$.
Thus, (\ref{ssK}) is proven to be a state-space realization of $K$ in (\ref{stoutretn}).

In what follows, we show that the resultant control system composed of (\ref{timepreu}) and (\ref{ssK}) is internally stable for any $\overline{\boldsymbol G}\in\overline{\mathscr G}$.
To this end, we apply the coordinate transformation of
\begin{equation}\label{coord}
\hat{\xi}=x-\hat{x},\quad\check{\xi}=\hat{x}.
\end{equation}
Then, we see that the entire control system is given as the cascade connection of the upstream system 
\[
\dot{\hat{\xi}}=\bigl\{
A+L\overline{\boldsymbol G}_{\rm apx}\mathit{\Gamma}+B(\hat{{\boldsymbol K}}_{y}C+\hat{{\boldsymbol K}}_{w}\mathit{\Gamma})
\bigr\}\hat{\xi}
+Wd,
\]
the stability of which is equivalent to that of the closed-loop system of (\ref{timehatk}) and (\ref{upsys}), and the downstream system
\begin{equation}
\dot{\check{\xi}}=(A+L\overline{\boldsymbol G}\mathit{\Gamma})\check{\xi}+L(\overline{\boldsymbol G}-\overline{\boldsymbol G}_{\rm apx})\mathit{\Gamma}\hat{\xi},
\end{equation}
the stability of which is equivalent to $\overline{\boldsymbol G}\in\overline{\mathscr G}$.
Thus, the control system composed of (\ref{timepreu}) and (\ref{ssK}) is equivalent to
\begin{equation}\label{sstzd}
T_{zd}:\left\{
\begin{array}{ccl}
\dot{\hat{\xi}}&=&\bigl\{
A+L\overline{\boldsymbol G}_{\rm apx}\mathit{\Gamma}+B(\hat{{\boldsymbol K}}_{y}C+\hat{{\boldsymbol K}}_{w}\mathit{\Gamma})
\bigr\}\hat{\xi}
+Wd \vspace{-0pt}\\
\dot{\check{\xi}}&=&(A+L\overline{\boldsymbol G}\mathit{\Gamma})\check{\xi}
+L(\overline{\boldsymbol G}-\overline{\boldsymbol G}_{\rm apx})\mathit{\Gamma}\hat{\xi}\vspace{-0pt}\\
z&=&S(\hat{\xi}+\check{\xi}).
\end{array}
\right.
\end{equation}
Hence, the internal stability is proven for any $\overline{\boldsymbol G}\in\overline{\mathscr G}$. \hfill $\square $
\end{proof}

\begin{figure}[t]
\begin{center}
\includegraphics[width=50mm]{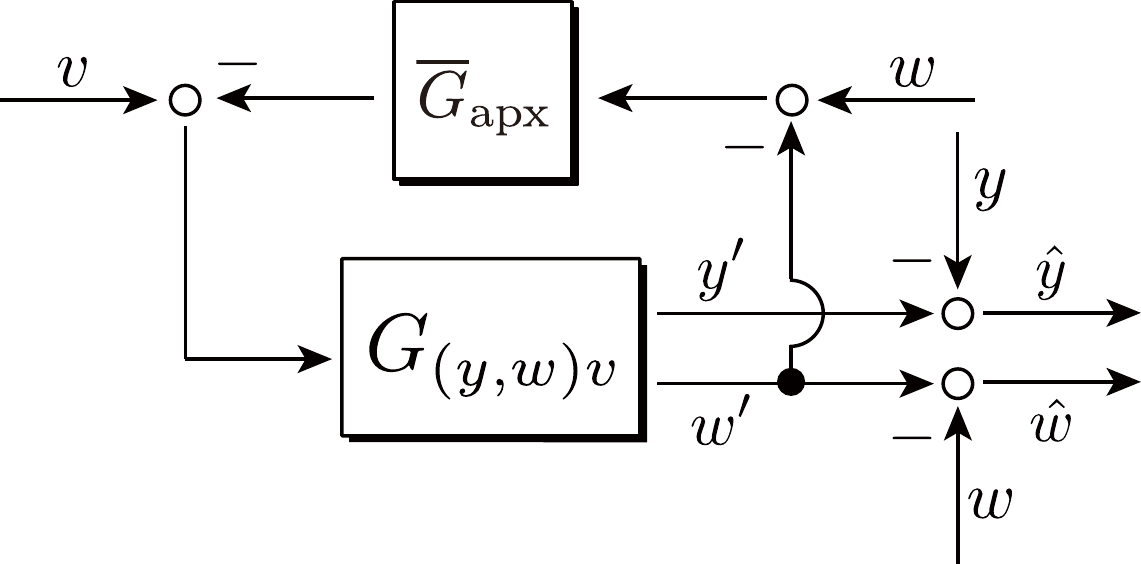}
\end{center}
\vspace{3pt}
\caption{
Block diagram of extended output rectifier.}
\label{figexrec}
\vspace{6pt}
\end{figure}

Theorem~\ref{thmtdr} provides a tractable state-space realization of the retrofit controller in Theorem~\ref{thmost}.
We remark that Theorem~\ref{thmtdr} gives only a ``sufficient" condition to prove that $K$ with the structure of (\ref{ssK}) is a retrofit controller, without imposing Assumptions~\ref{asstaG} and \ref{asmodsta}.
However, it is generally difficult to show by this time-domain analysis if the structure of $K$ in (\ref{ssK}) is ``necessary" or not.
Such necessity of the controller structure is shown by the frequency-domain analysis in Theorem~\ref{thmost}, on the premise of Assumptions~\ref{asstaG} and \ref{asmodsta} making the Youla parameterization tractable.
To prove the necessity of the retrofit controller structure without  Assumptions~\ref{asstaG} and \ref{asmodsta} is left as future work.

\begin{figure}[t]
\begin{center}
\includegraphics[width=65mm]{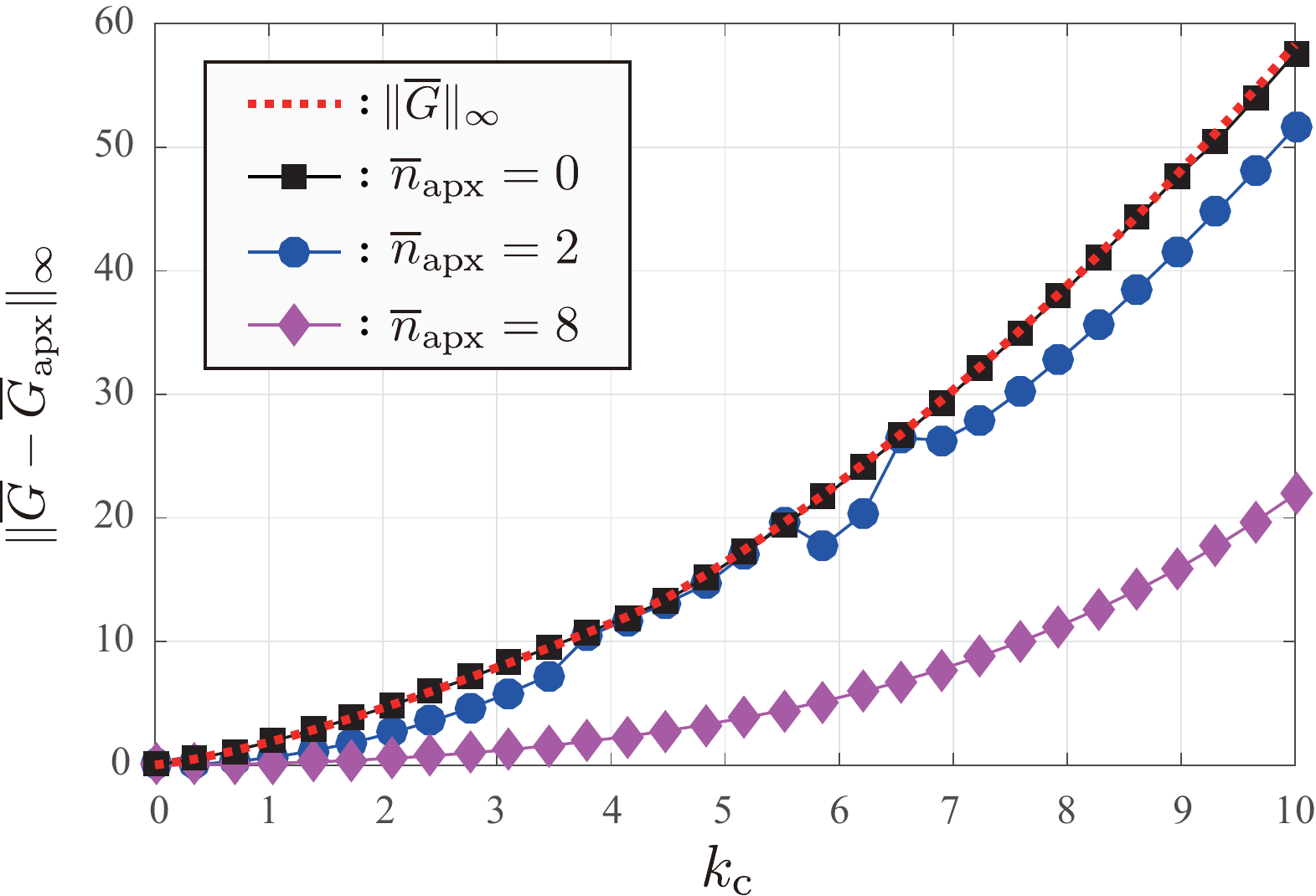}
\end{center}
\vspace{3pt}
\caption{
Modeling errors versus coupling constants between subsystem and environment.}
\label{figerrors}
\vspace{6pt}
\end{figure}

\begin{figure*}[t]
\begin{center}
\includegraphics[width=160mm]{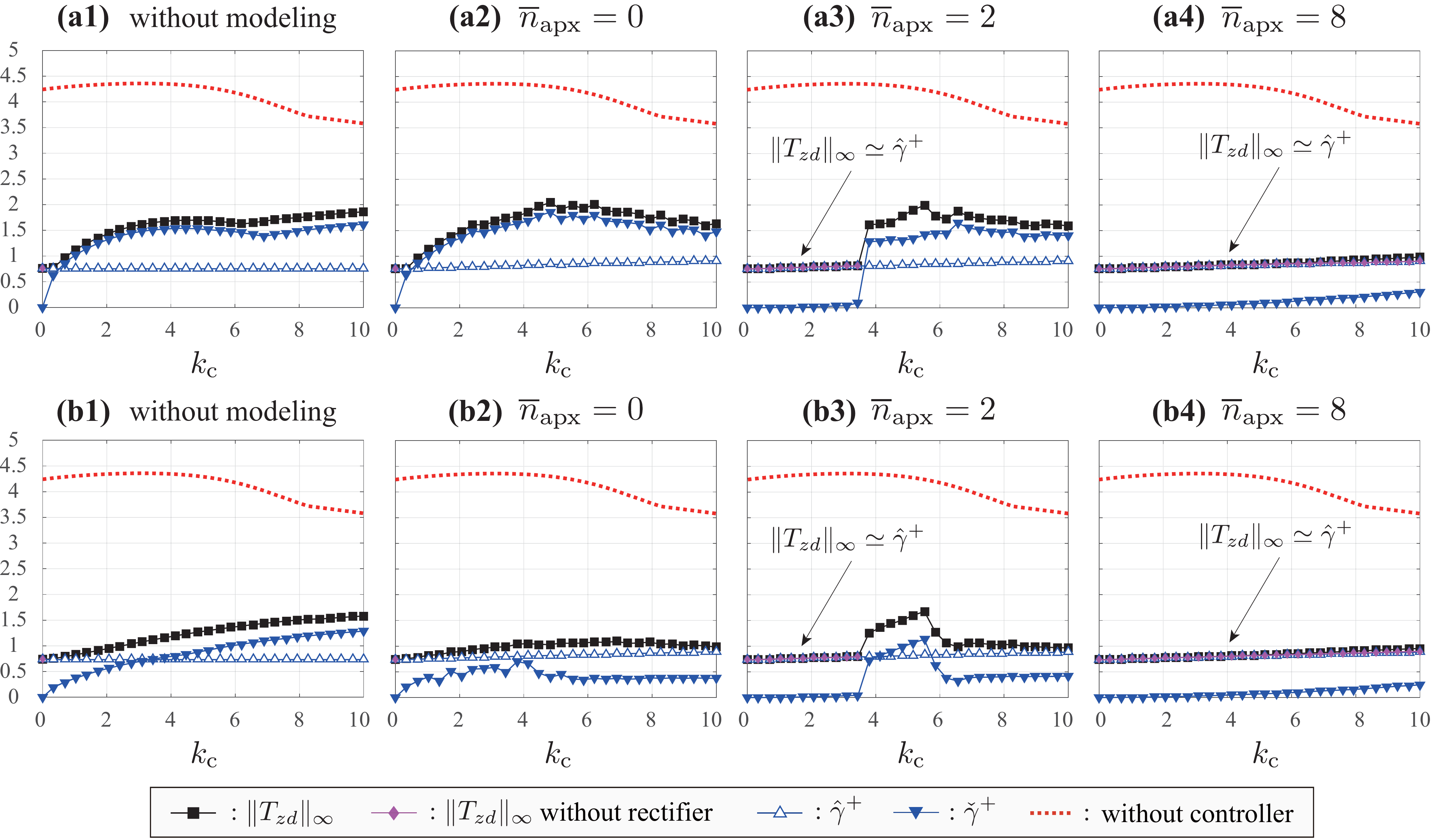}
\end{center}
\vspace{3pt}
\caption{
Resultant control system performance versus coupling strength.
(a1)-(a4) $\alpha=0.2$. (b1)-(b4) $\alpha=0.01$. 
(a1), (b1) Case without environment models, i.e., existing method. 
(a2), (b2) Case with static environment models. 
(a3), (b3) Case with 2-dimensional environment models.
(a4), (b4) Case with 8-dimensional environment models.
}
\label{figperf}
\vspace{6pt}
\end{figure*}

\begin{figure*}[t]
\begin{center}
\includegraphics[width=140mm]{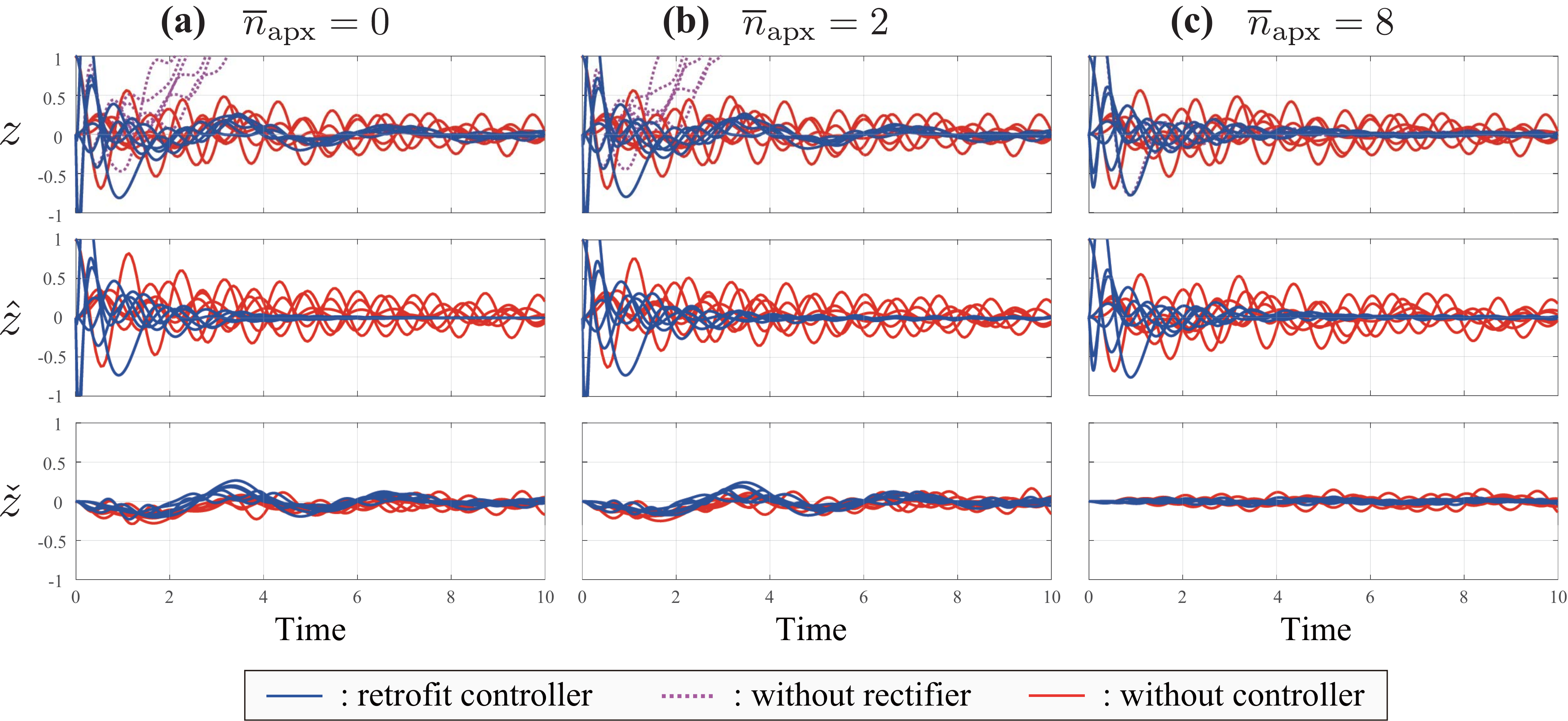}
\end{center}
\vspace{3pt}
\caption{
Resultant control system behavior in response to impulsive disturbance at the first node.
The coupling constant and weighting constant are set as $k_{\rm c}=10$ and $\alpha=0.2$, comparable to Fig.~\ref{figtrajml}(c).
(a) Case with static environment model. 
(b) Case with 2-dimensional environment model.
(c) Case with 8-dimensional environment model.}
\label{figtrajapx}
\vspace{6pt}
\end{figure*}

We can interpret the cascade system (\ref{sstzd}) as a state-space realization of the block diagram in Fig.~\ref{fignewsigflow}.
In particular, we see that
\[
\hat{w}=\mathit{\Gamma}\hat{\xi},\quad\hat{z}=S\hat{\xi},\quad\check{z}=S\check{\xi}.
\]
The transformation between the original realization in Fig.~\ref{figpresys}(b) and the cascade realization in Fig.~\ref{fignewsigflow} is given as the coordinate transformation (\ref{coord}), whose inverse is given as
\[
x=\hat{\xi}+\check{\xi},\quad\hat{x}=\check{\xi}.
\]
The left equation can be seen as decomposition of $x$ into the sum of $\hat{\xi}$ and $\check{\xi}$.
In the previous work \cite{ishizaki2018retrofit}, such state decomposition is discussed in the context of state-space expansion, called hierarchical state-space expansion, where no environment modeling is considered.
From (\ref{sstzd}), we see that $\hat{\xi}$ in the upstream system is directly regulatable by $\hat{K}$ in (\ref{timehatk}), but $\check{\xi}$ in the downstream system is not.
This is compatible with the discussion of the regulatability of $\hat{z}$ and $\check{z}$.

\section{Numerical Experiments}

We again consider the second-order oscillator network in Section~\ref{secmot}.
Just for simplicity of demonstration, we produce approximate environment models applying the balanced truncation \cite{moore1981principal}, which is well known as a standard model reduction technique based on the controllability and observability Gramians.
In practice, the balanced truncation is not directly applicable because a complete model of the environment is supposed to be unknown.
However, it is shown in \cite{kashima2016noise} that such an energy-based model reduction method can be understood as a special case of data-based model reduction methods, e.g., proper orthogonal decomposition (POD), with sufficient data collection of systems driven by stochastic noise.
Based on this fact, we consider simulating a data-rich situation for environment modeling, where we suppose an approximate models produced by the balanced truncation to be an ideally identified model.

Varying the dimension of approximate environment models, denoted by $\overline{n}_{\rm apx}$, we plot in Fig.~\ref{figerrors} the resultant modeling errors $\|\overline{G}-\overline{G}_{\rm apx}\|_{\infty}$ versus the coupling constant $k_{\rm c}$.
From this figure, we see that modeling errors decrease as model dimensions increase, and they increase as coupling constants increase.
The case of $\overline{n}_{\rm apx}=0$ corresponds to the modeling of the static characteristics of $\overline{G}$, which is represented by the static feedback of $\theta_{i}$ in (\ref{intvi}).
We remark that even the 8-dimensional model is not very accurate because it has more than 30\% worst-case error when $k_{\rm c}=10$.
As seen here, an approximate model with moderate dimensions may have a large modeling error.
We remark that the magnitude of modeling errors is not assurable in practice when the environment is unknown and variant.

Varying the dimension of approximate models, we plot the resultant control performance versus the coupling constant in Fig.~\ref{figperf}. 
The black lines with squares represent the actual control performance level $\|T_{zd}\|_{\infty}$, the blue lines with  triangles represent the assumed performance level $\hat{\gamma}^{+}$, the blue lines with inverted triangles represent the performance gap $\check{\gamma}^{+}$, the magenta lines with diamonds correspond to the cases without the output rectifier, and the red dotted lines correspond to the cases without controllers.
Figs.~\ref{figperf}(a1)-(a4) correspond to the cases where the weighting constant $\alpha$ in (\ref{Jalp}) is given as $0.2$, while Figs.~\ref{figperf}(b1)-(b4) correspond to the cases where $\alpha=0.01$.
We remark that the magenta lines with diamonds frequently fall out because the resultant control systems become unstable.
From the set of these plots, we obtain the following observations.
\begin{itemize}
\item[\textbf{--}] The proposed retrofit controller can assure the resultant control system stability for all the cases, while the simple controller not involving the output rectifier induces the system instability for almost all cases where no environment model is used and the static model is used. \vspace{2pt}
\item[\textbf{--}] When the 2-dimensional model is used, both retrofit and simple controllers attain an actual performance level comparable to the assumed performance level provided that the coupling constant is less than about 4. \vspace{2pt}
\item[\textbf{--}] When the 8-dimensional model is used, the retrofit controller and the simple controller attain comparable performance levels for all coupling constants within 0 to 10. \vspace{2pt}
\item[\textbf{--}] The retrofit controller generally attains a performance level better than that in the case where no controller is implemented.
\end{itemize}

As demonstrated here, the stability of resultant control systems is always assured without requiring the assurance of environment modeling accuracy, and the resultant control performance can be improved as improving the accuracy of environment modeling.
These are the major advantages of the proposed retrofit control.
For reference, we plot the impulse response of the resultant control system in Fig.~\ref{figtrajapx}, where Figs.~\ref{figtrajml}(a)-(c) correspond to the cases where the static environment model, the 2-dimensional model, and 8-dimensional model are used, respectively.
We set the parameters as $k_{\rm c}=10$ and $\alpha=0.2$, comparable to Fig.~\ref{figtrajml}(c).
From these results, we can confirm the significance to incorporate approximate environment modeling into retrofit control.

\section{Concluding Remarks}

In this paper, we developed a novel retrofit control method with approximate environment modeling.
We first derived a compact representation of all the proposed retrofit controllers based on a constrained version of the Youla parameterization, where an approximate model of environments is involved as a tuning parameter.
Then, we showed that the resultant control system has a cascade structure, which makes it easy to analyze upper and lower bounds of the resultant control performance. 
Furthermore, we showed that the proposed retrofit controller has a tractable state-space realization with an output rectifier.

The major advantages of the proposed retrofit control are as follows.
\begin{itemize}
\item[\textbf{--}] The stability of the resultant control system is robustly assured regardless of not only the stability of approximate environment models, but also the magnitude of modeling errors, provided that the network system before implementing retrofit control is originally stable.
\item[\textbf{--}] The resultant control performance can be regulated by adjusting the accuracy of approximate environment modeling. 
\end{itemize}
These advantages have a good compatibility with the modeling of unknown environments because the accuracy of identified models may neither be reliable nor assurable in practice, while it can be expected to improve by suitable learning trials.
Such an unassured modeling error is not generally considered in a standard robust control setting.

Incorporating a closed-loop system identification method for online environment modeling would be an interesting direction of future work.
In addition, it is meaningful to conduct a more detailed analysis in the case where multiple retrofit controllers  are simultaneously implemented in respective subsystems based on individual environment modeling.


\bibliographystyle{elsarticle-num}         
\bibliography{IEEEabrv,reference,reference_CREST,reference_aux}            



\end{document}